\newtheorem{theorem}{Theorem}[section]
\newtheorem*{thm}{Theorem}
\newtheorem{problem}{Problem}
\newtheorem{proposition}[theorem]{Proposition}
\newtheorem{conjecture}[theorem]{Conjecture}
\newtheorem{lemma}[theorem]{Lemma}
\newtheorem*{problem*}{Problem}
\theoremstyle{definition}
\newtheorem{definition}[theorem]{Definition}
\newcommand{\supp}{\operatorname{supp}}
\renewcommand{\phi}{\varphi}
\providecommand{\norm}[1]{\left\lVert #1 \right\rVert}
\newcommand{\R}{\mathbb{R}}
\newcommand{\C}{\mathbb{C}}
\newcommand{\Rd}{\mathbb{R}^d}
\newcommand{\Z}{\mathbb{Z}}
\newcommand{\F}{\mathcal{F}}
\newcommand{\G}{\mathcal{G}}
\newcommand{\glam}{g_{\lambda}}
\newcommand{\den}{\delta}
\newcommand{\eps}{\varepsilon}
\renewcommand{\l}{\lambda}
\renewcommand{\L}{\Lambda}
\newcommand{\Lt}[1][d]{L^2(\R^{#1})}
\newcommand{\vol}{\textnormal{vol}}
\begin{document}

\title{Quantum paving: When sphere packings meet Gabor frames}

\author[M.~Faulhuber]{Markus Faulhuber \orcidlink{0000-0002-7576-5724}}
\address{Faculty of Mathematics, University of Vienna\newline Oskar-Morgenstern-Platz 1, 1090 Vienna, Austria}
\email{markus.faulhuber@univie.ac.at}

\author[T.~Strohmer]{Thomas Strohmer\orcidlink{0000-0003-2029-3317}}
\address{Department of Mathematics, University of California, Davis, CA 95616, USA }
\email{strohmer@math.ucdavis.edu}

\makeatletter
\@namedef{subjclassname@2020}{\textup{2020} Mathematics Subject Classification}
\makeatother
\subjclass[2020]{42C15, 81S30}
\keywords{Gaussians, Packing and Covering, Phase Space, Schrödinger Representation}

\begin{abstract}
We introduce the new problems of quantum packing, quantum covering, and quantum paving. These problems arise naturally when considering an algebra of non-commutative operators that is deeply rooted in quantum physics as well as in Gabor analysis. Quantum packing and quantum covering show similarities with energy minimization and the dual problem of polarization. Quantum paving, in turn, aims to simultaneously optimize both quantum packing and quantum covering. Classical sphere packing and covering hint the optimal configurations for our new problems. We present solutions in certain cases, state several conjectures related to quantum paving and discuss some applications.
\end{abstract}

\maketitle

\section{Introduction}\label{s:intro}
The sphere packing problem is an ancient mathematical problem and has fascinated and attracted researchers for centuries~\cite{Conway}. The goal is to arrange non-overlapping spheres of same size in the most economical way, i.e., such that their interiors use up the largest possible portion of Euclidean space. In dimension three, the problem became (in)famous due to the Kepler conjecture which, with great effort, was proven to be correct by T.~Hales \cite{Hal05} about 20 years ago, and four centuries after its statement.

M.~Viazovska's tremendous breakthrough in solving the sphere packing problem in dimension 8 \cite{Via17} and, in joint work with H.~Cohn, A.~Kumar, S.~Miller and D.~Radchenko in dimension 24 \cite{Coh-Via17} is among the greatest recent developments in mathematics.

The cousin of the sphere packing problem is the sphere covering problem, which has gained less attention~\cite{Conway}. The goal is to entirely cover Euclidean space by equally sized spheres, such that the overlap of their interiors is as small as possible. In a way, the two problems are two sides of the same medal. Nonetheless, they are usually treated separately and tools and solutions may be quite different. 

We introduce the new notions of \textit{quantum packing} and \textit{quantum covering}, which are inspired by viewing the construction of optimal Gaussian Gabor frames through the lens of classical sphere  packing and covering. This is not the first time these two fields meet: Gabor systems were used in the article by Manin and Marcolli~\cite{ManMar23} to construct Cohn-Elkies functions which come arbitrarily close to the linear programming bound for the sphere packing problem derived by Cohn and Elkies \cite{CohElk03}. Here, we go in the other directions and seek to use knowledge on sphere packings and coverings to find candidates which solve our newly introduced problems.
They can also be formulated  by a two-sided inequality, involving non-commuting unitary operators, arising from the Schr\"odinger representation of the Heisenberg group.  

Just as its classical pendant, the quantum packing and covering problem is simple in its statement (though certainly more involved than the classical case), but notoriously difficult to solve. In fact, there are additional issues popping up, which do not arise in the classical questions, e.g., it is non-obvious and perhaps not true that the solutions are invariant under scaling (i.e, changing the density of a lattice).

Lastly, we present the notion of \textit{quantum paving}, which aims (in a sense that we will make precise) to simultaneously optimize both quantum packing and quantum covering. As such it is  related to finding the packing-covering constant \cite{SchVal04}.

The problems we discuss have applications in frame theory, time-frequency analysis, wireless communications, and potentially also in quantum error correction.


\section{Formulation of the problems}
We start with introducing the problems, using mainly notation from the textbooks of G.~Folland~\cite{Fol89} and K.~Gröchening~\cite{Gro01}. The details are then given in subsequent sections.

Let $\varphi \in \Lt$ be the normalized standard Gaussian function, i.e.,
\begin{equation}\label{eq:Gaussian}
    \varphi(t) = 2^{d/4} e^{-\pi |t|^2}, \quad t \in \Rd
    \qquad \text{ so } \quad
    \norm{\varphi}_{L^2}=1.
\end{equation}
Let $z = (x,\omega) \in \Rd \times \Rd \cong \R^{2d}$ and $\tau \in \R$, so $(x,\omega;\tau)$ is an element of the Heisenberg group ~$\mathbb{H}$, and let $\rho(z) = \rho(x,\omega;0)$ denote the unitary operator coming from the Schr\"odinger representation of the element $(x,\omega;0)$ from the Heisenberg group. In particular, up to a phase factor $c \in \C$, $|c|=1$, we have $\rho(z) = c \ \pi(z)$, where $\pi(z)$ is the classical time-frequency shift, i.e., the composition of the translation operator $T_x$ and the modulation operator $M_\omega$, formally defined in~\eqref{eq:Tx_Mw}.

Now, for a lattice $\L$ in phase space $\R^{2d}$, find the optimal constants $A_\L$ and $B_\L$ such that
\begin{equation}\label{eq:frame}
    A_\L \norm{f}_{\Lt}^2 \leq \sum_{\l \in \L} |\langle f, \rho(\l) \varphi \rangle|^2 \leq B_\L \norm{f}_{\Lt}^2, \qquad \forall f \in \Lt.
\end{equation}
The reader familiar with Gabor analysis will identify \eqref{eq:frame} as the frame inequality of the Gabor system $\G\{c \, \pi(\l) \varphi \mid \l \in \L\}$. We can now introduce the problems of quantum packing, covering, and paving.
\begin{framed}
\begin{problem}[Quantum packing]\label{pro:qpacking}
    For any given fixed density, find the configuration $\L_B$ such that $B_{\L_B} \leq B_\L$ for all $\L$ of the same density.
\end{problem}
\medskip
\begin{problem}[Quantum covering]\label{pro:qcovering}
    For any given fixed density, find the configuration $\L_A$ such that $A_{\L_A} \leq A_\L$ for all $\L$ of the same density.
\end{problem}
\medskip
\begin{problem}[Quantum paving]\label{pro:qpaving}
    For any given fixed density, find the configuration $\L_{B/A}$ such that $B_{\L_{B/A}} / A_{\L_{B/A}} \leq B_\L / A_\L$ for all $\L$ of the same density.
\end{problem}
\end{framed}

The ``quantum'' aspect in the aforementioned problems can be traced back along different routes to a non-commuting operator algebra that plays a foundational role in quantum physics. Each of these different routes highlights a different facet of these problems.
In one route, taken in Sections~\ref{s:gabor} and~\ref{s:ofdm} we will encounter {\em coherent states}, i.e., states of minimum uncertainty, which form a fundamental concept in quantum mechanics. Another route will take us to ``paving'' the phase space,  see Section~\ref{s:phase}.
Yet another route revolves around the {\em quantum torus}, discussed briefly in the next subsection.

\subsection{The quantum torus}

For a pair of non-commuting, unitary operators $(U_1$, $U_2)$ on a Hilbert space and $r \in \R/\Z$, the quantum torus~$\mathcal{T}$, also called non-commutative torus \cite{Con94}, is an operator algebra with elements being the formal sums of the form (cf.\ \cite[pp.~217, Ex.~2.$\beta$]{Con94})
\begin{equation}
	\sum_{\l \in \L} c_\l \ U(\l),
	\quad \text{ where } \quad
	U(\l) = U_2(\l_2) U_1(\l_1),
	\quad U_2 U_1 = e^{2 \pi i r} U_1 U_2,
	\quad \text{ and } \quad \l=(\l_1,\l_2) \in \L \subset \R^{2d}.
\end{equation}

In our situation, we use the Hilbert space $\Lt$ of square-integrable functions and our non-commuting operators are the translation operator $T_x$ and the modulation operator $M_\omega$. Their combination leads to time-frequency shifts $\pi(z)$, serving as coordinates for the quantum torus (see \eqref{eq:Tx_Mw}, \eqref{eq:TFshift} and \eqref{eq:comm_relations} below). So, its elements are formally given by
\begin{equation}\label{eq:quantum_torus}
	\sum_{\l \in \L} c_\l \ \pi(\l), \quad \L \subset \R^{2d},
\end{equation}
where $c = (c_\l)_{\l \in \L}$ is a sequence on the lattice $\L$. The properties of the sequence induce an additional structure on the quantum torus $\mathcal{T}_\L$. For example, denoting the bounded operators on $\Lt$ by $\mathcal{B}(\Lt)$, if $c \in \mathcal{S}(\L)$ is a sequence of rapid decay, i.e., it decreases faster than the inverse of any polynomial, then we obtain the \textit{smooth} structure:
\begin{equation}
	\mathcal{T}_\L^\infty = \{T \in \mathcal{B}(\Lt) \mid T = \sum_{\l \in \L} c_\l \ \pi(\l), \, (c_\l) \in \mathcal{S}(\L) \}.
\end{equation}
By assuming the sequence of coefficients to be absolutely summable we obtain:
\begin{equation}
	\mathcal{T}_\L^1 = \{T \in \mathcal{B}(\Lt) \mid T = \sum_{\l \in \L} c_\l \ \pi(\l), \, (c_\l) \in \ell^1(\L) \}.
\end{equation}
Characterizations of these and more structures, also in terms of Gabor frames, may be found in the work of Luef \cite{Lue07}, \cite{Lue09} \cite{Lue11}. For a more general framework, we refer to the article of Manin \cite{Man01} and for our specific setup we refer to \cite[Sec.~5]{LueMan09}.

For us, \eqref{eq:quantum_torus} provides one way of motivating quantum packing and covering, as they concern continuity and invertibility properties of elements of the form \eqref{eq:quantum_torus} which come from a quantum torus. The discipline which studies these properties is \textit{quantum-harmonic analysis} (QHA), introduced by Werner~\cite{Wer84}. QHA aims to formulate classical results from harmonic analysis for operators of the form \eqref{eq:quantum_torus} and has recently also gained new attraction in the context of Gabor analysis \cite{LueSkr21}, \cite{Skr20}, \cite{Skr_PhD}. Depending on the (decay) properties of the sequence $(c_\l)_{\l \in \L}$, we have different kinds of bounded operators forming a quantum torus (see \cite[Sec.~3]{Lue06}), similar to various function spaces on the torus. Finding conditions on the invertibility of the series in \eqref{eq:quantum_torus} is a pendant to Wiener's lemma for Fourier series and falls into the scope of quantum covering. Questions of convergence fall into the scope of quantum packing. The mentioned properties may be studied by means of \textit{quantum thetas}~$\Theta_\L$.
These are operator pendants to classical \textit{theta functions} (see \cite{LueMan09}, \cite{Man01}) and act on functions $f \in \Lt$ by the rule
\begin{equation}\label{eq:quantum_theta}
	\Theta_\L f = \sum_{\l \in \L} \langle \varphi, \pi(\l) \varphi \rangle \pi(\l) f.
\end{equation}
Compared to classical thateta functions, the inner product is the replacement for the Gaussian function, $\pi(\l)$ replaces the complex units and $f$ the point at which we evaluate.
By the \textit{symplectic Poisson summation formula} (see Section~\ref{s:symplectic}) we will see that a quantum theta is just the frame operator in disguise: using the adjoint lattice $\L^\circ$ \eqref{eq:adjoint_lattice}, $\Theta_{\L^\circ}$ is the \textit{Janssen representation} of the frame operator (see \cite[Chap.~9.4]{Gro01}).

If we take the inner product of $\Theta_\L f$ with $f$ we arrive at
$\sum_{\l \in \L} |\langle f, \rho(\l) \varphi \rangle|^2$, which is the term we try to bound in~\eqref{eq:frame}. 
Lower bounds of~\eqref{eq:frame} are related to the invertibility of the Gabor frame operator, while upper bounds of~\eqref{eq:frame} ensure continuity of the Gabor frame operator, see Section~\ref{s:gabor}.

\section{Notation and tools}\label{s:notation}
We denote the Hilbert space of square integrable (complex-valued) functions on $\Rd$ by $\Lt$. The unitary operators $T_x$ of translation by $x \in \Rd$ and $M_\omega$ of modulation by $\omega \in \Rd$ act on functions by the rules
\begin{equation}\label{eq:Tx_Mw}
	T_x f(t) = f(t-x)
	\quad \text{ and } \quad
	M_\omega f(t) = e^{2 \pi i \omega \cdot t} f(t).
\end{equation}
The usual positive-semidefinite order between operators or matrices is denoted by $\preccurlyeq$.

The dot $\cdot$ denotes the Euclidean inner product on $\Rd$, which we see as a space of column vectors. The composition of $T_x$ and $M_\omega$ is called a time-frequency shift and is denoted by
\begin{equation}\label{eq:TFshift}
	\pi(z) = M_\omega T_x, \quad z=(x,\omega) \in \Rd \times \Rd \cong \R^{2d}.
\end{equation}
We say $\R^{2d}$ is the \textit{time-frequency plane} and also call it \textit{phase space} in our setting. The operators $T_x$ and $M_\omega$ do not commute in general but satisfy the following commutation relations:
\begin{equation}\label{eq:comm_relations}
	M_\omega T_x = e^{2 \pi i \omega \cdot x} \, T_x M_\omega.
\end{equation}
Hence, the set of time-frequency shifts is not closed under composition. Indeed, we have
\begin{equation}
	\pi(z)\pi(z') = e^{-2 \pi i x \cdot \omega'} \, \pi(z+z'), \qquad z=(x,\omega), \ z' = (x', \omega').
\end{equation}
We use the unitary Fourier transform $\F$ as usual in harmonic analysis \cite{Fol89} and time-frequency analysis~\cite{Gro01}:
\begin{equation}
    \F f(\omega) = \int_{\Rd} f(t) e^{-2 \pi i \omega \cdot t} \, dt.
\end{equation}
This formula is valid for a suitable dense subspace of $\Lt$ and $\F$ extends to a unitary operator on all of $\Lt$. Fixing a function $g \neq 0$, called window function, the short-time Fourier transform (STFT) of a function $f$ with respect to $g$ is given by
\begin{equation}\label{stft}
    V_g f(x,\omega) = \int_{\R^d} f(t) \overline{g(t-x)} e^{-2\pi i t \cdot \omega} dt = \langle f, M_\omega T_x g \rangle, 
    \quad \text{for $x,\omega \in \R^d$.}
\end{equation}
It plays a central role in the formulation of quantum paving as we will outline in the upcoming subsection.

A lattice (which we always assume to be full-rank) is a discrete co-compact subgroup of $\Rd$. We can also write it as $\Z$-module of a basis $M = \{v_1, \ldots , v_d\}$ of $\Rd$;
\begin{equation}
    \L = M \Z^d = \left\lbrace \sum_{l=1}^d k_l v_l \mid k_l \in \Z \right\rbrace, \quad M \in GL(d,\R).
\end{equation}
The dual lattice of $\L$ is denoted by $\L^\perp$ and characterized by
\begin{equation}
    \L^\perp = \{ \l^\perp \in \Rd \mid \l \cdot \l^\perp \in \Z, \ \forall \l \in \L\}.
\end{equation}
With this notation, the \textit{Poisson summation formula} reads
\begin{equation}\label{eq:PSF}
    \sum_{\l \in \L} f(\l+x) = \frac{1}{\vol(\Rd/\L)} \sum_{\l^\perp \in \L^\perp} \F f(\l^\perp) e^{2 \pi i z \cdot \l^\perp}.
\end{equation}
Note that $\L^\perp$ is indeed a lattice and we have $\L^\perp = M^{-T} \Z^d$, where $M^{-T}$ is the inverse transpose of $M$ and $\L = M \Z^d$. The density of $\Lambda$ is given by $\den(\Lambda)=1/|\det(M)| = 1/\vol(\Rd/\L)$.

\subsection{Gabor systems and frames}\label{s:gabor}

Gabor frames are a key concept of time-frequency analysis which enjoy a wide range of applications in mathematics, engineering, and signal processing. We refer to~\cite{Gro01} for an excellent introduction to Gabor analysis.  Classically, Gabor frames have been defined only for rectangular time-frequency lattices. But for obvious reasons, in this paper we prefer the flexibility that comes with general lattices in the time-frequency domain. Hence, 
given a non-zero window function $g \in L^2 (\R^d)$ and a lattice $\Lambda$ we define a Gabor system $\G(g,\L) = \{g_\lambda\}_{\lambda \in\Lambda}$ via
$$g_\lambda = \pi(\lambda) g , \qquad \lambda \in \Lambda.$$
If there exist finite constants $0 < A \le B$ such that for all $f \in L_2(\R^d)$ we have
\begin{equation}
    \label{gaborframe}
    A \|f\|_2 \le \sum_{\lambda \in \Lambda} | \langle f, \glam \rangle |^2 \le B \|f\|_2,
\end{equation}
then $\{g_\lambda\}_{\lambda \in\Lambda}$ is called a Gabor frame for $L^2(\R^d)$. The smallest constant $A$ and the largest constant $B$ in~\eqref{gaborframe} are called the frame bounds.
The associated frame operator $S$ is defined by
$$Sf = \sum_{\lambda  \in \Lambda} \langle f, \glam \rangle \glam.$$ It is easy to see that this positive-definite self-adjoint operator satisfies $I A \preccurlyeq S \preccurlyeq I B$, where $I$ is the identity operator on $L^2(\R^d)$. If in addition to~\eqref{gaborframe} it holds that the elements are linearly independent then $\{g_\lambda\}_{\lambda \in\Lambda}$ is called a Gabor Riesz basis.

For a given Gabor frame $\{g_\lambda, \lambda \in \Lambda\}$ with frame bounds $A,B >0$, $\{S^{-1} g_\lambda, \lambda \in \Lambda\}$ is a frame with frame bounds $B^{-1}, A^{-1} >0$, the so-called canonical dual Gabor frame. Since the frame operator commutes with time-frequency shift $\pi(\lambda)$ for $\lambda \in \Lambda$, we have
$S^{-1} g_\lambda = \pi(\lambda) S^{-1} g$, where the function $S^{-1}g$ is also referred to as the canonical dual window.

Also, for a Gabor system 
$\{g_\lambda, \lambda \in \Lambda\}$ (no matter if it is a frame or not), we define the associated Gram matrix $G$ by $G_{\lambda,\nu} = \langle \pi(\nu)g, \pi(\lambda) g \rangle$ for $\lambda,\nu \in \Lambda$. The structure of the Gram matrix $G$ was exploited by Janssen \cite{Jan96} to give first formulas for exact frame bounds for several window functions, including the Gaussian $\varphi$, under the assumption that the lattice is rectangular and its density is an integer. The integer density assumption is due to technical issues caused by the non-commutativity of time-frequency shifts.

An important quantity of the frame operator, both from a theoretical viewpoint as well as for numerical and practical purposes, is the condition number of $S$, given by the ratio $\frac{B}{A}$. Like for matrices, the motto ``smaller is better'' is in general true for frame operator condition numbers. In particular, if $A=B$, we have a {\em tight frame} and $S$ is just a multiple of the identity operator, making its inversion trivial. In that case the frame and its dual coincide up to normalization.

Throughout this paper we are only interested in Gabor frames where the window function is a Gaussian, i.e., $g = \varphi$, where $\phi$ is defined in~\eqref{eq:Gaussian}. Gaussian functions play a special role in quantum physics (as well as in many other areas), since they minimize the uncertainty principle~\cite{Gro01}. In quantum
mechanics and in quantum optics, time-frequency shifts of a Gaussian, play an important role, where they are usually referred to by the name {\em coherent states}~\cite{Klauder,Antonio}.

It is a rather deep result that $\{\varphi_\lambda, \lambda \in \Lambda\}$ is a frame for $L^2(\R)$ if and only if $\den(\Lambda) >1$, see~\cite{Gro01}. If we consider Gabor systems for $\Lt$ with $d > 1$, then $\delta(\L) > 1$ is a necessary density condition, but provably not always sufficient. A characterization of multivariate (Gaussian) Gabor frames is still missing.

\subsection{Symplectic methods}\label{s:symplectic}
We will exploit the symplectic structure of phase space and introduce the \textit{standard symplectic matrix} as
\begin{equation}
    \mathcal{J} = \begin{pmatrix*}[r]
        \mathbf{0} & \mathbf{I}\\
        -\mathbf{I} & \mathbf{0}
    \end{pmatrix*},
\end{equation}
where the blocks are of size $d \times d$ and denote the identity matrix $\mathbf{I}$ and the zero matrix~$\mathbf{0}$. For $z,z' \in \R^{2d}$, the \textit{standard symplectic form} is given by
\begin{equation}\label{eq:symplectic_form}
    \sigma(z,z') = z \cdot \mathcal{J} z'.
\end{equation}
A matrix $S \in SL(2d,\R)$ belongs to the symplectic group $Sp(d)$ if and only if
\begin{equation}\label{eq:symplectic}
    \sigma(z,z') = \sigma(S z, Sz')
    \quad \Longleftrightarrow \quad
    S^T \mathcal{J} S = \mathcal{J}.
\end{equation}
We recall that the symplectic group $Sp(d)$ is in general a proper subgroup of $SL(2d,\R)$. Only in the case $d=1$, we have $Sp(1) = SL(2,\R)$.

As we are only interested in even-dimensional lattices we also wish to assign them a \textit{symplectic dual} or \textit{adjoint} lattice $\L^\circ$. This may be characterized by commutations of time-frequency shifts or, equivalently, by using the standard symplectic form $\sigma(. \, , \, .)$:
\begin{align}\label{eq:adjoint_lattice}
	\L^\circ
    & = \{ \l^\circ \in \R^{2d} \mid \pi(\l)\pi(\l^\circ) = \pi(\l^\circ) \pi(\l), \, \forall \l \in \L \}\\
    & =
    \{ \l^\circ \in \R^{2d} \mid \sigma(\l,\l^\circ) \in \Z, \, \forall \l \in \L\}.
\end{align}
Note the similarity between the second characterization of the symplectic dual lattice $\L^\circ$ to the definition of the classical dual lattice $\Lambda^\perp$.

A \textit{symplectic lattice} in $\R^{2d}$ is a lattice $\Lambda = S \Z^{2d}$ with $S \in Sp(d)$. More generally, we will also call scaled versions $\alpha \Lambda = \alpha S \Z^{2d}$, $\alpha > 0$, symplectic if the generating matrix $S$ is symplectic. It follows from \eqref{eq:symplectic} that, for $S \in Sp(d)$,
\begin{equation}
    S = \mathcal{J}^{-1} S^{-T} \mathcal{J}
    \quad \text{ and } \quad
    \Lambda = S \Z^{2d} = \mathcal{J}^{-1} S^{-T} \mathcal{J} \Z^{2d} = \mathcal{J}^{-1} S^{-T} \Z^{2d} = \mathcal{J}^{-1} \Lambda^\perp = \mathcal{J} \L^\perp,
\end{equation}
as $\mathcal{J}^{-1} = - \mathcal{J}$ and $\Lambda^\perp = - \Lambda^\perp$. We also demand a \textit{symplectic Fourier transform}, which for a suitable function $F$ on $\R^{2d}$ is given by
\begin{equation}
    \F_\sigma F (z) = \iint_{\R^{2d}} F(z') e^{-2 \pi i \sigma(z',z)} \, dz' = \F F(\mathcal{J} z).
\end{equation}
Hence, $\F_\sigma$ inherits properties from the usual Fourier transform $\F$, e.g., it extends to a unitary operator on $\Lt[2d]$. In addition, it is involutive, i.e., $\F_\sigma^{-1} = \F_\sigma$. The symplectic Fourier transform $\F_\sigma$ and the symplectic dual lattice together provide us with a \textit{symplectic Poisson summation formula} (cf.~\cite[Sec.~2.2]{Fau18}):
\begin{equation}\label{eq:symplectic_PSF}
    \sum_{\lambda \in \Lambda} F(\lambda+z) = \sum_{\lambda \in \Lambda} \F_\sigma F(\lambda) e^{2 \pi i \sigma(z,\lambda)}.
\end{equation}
We refer to the textbooks by Folland \cite{Fol89} and Gröchenig \cite{Gos11} as well as the articles \cite[Sec.~1.1]{BetFau23}, \cite[App.~B]{BetFauSte21}, \cite{Fau18} for more details on symplectic methods.

As an application of the symplectic Poisson summation formula, we derive the Janssen representation of the Gaussian Gabor frame operator, which is given by
\begin{equation}\label{eq:Janssen_rep}
	S_{\L} = \vol(\L)^{-1} \sum_{\l^\circ \in \L^\circ} \langle \varphi, \pi(\l^\circ) \varphi \rangle \, \pi(\l^\circ).
\end{equation}
We see that, by the Janssen representation \eqref{eq:Janssen_rep}, the scaled frame operator $\vol(\L) S_\L$ is actually the quantum theta $\Theta_{\L^\circ}$ defined by \eqref{eq:quantum_theta}. If the lattice $\L$ is symplectic (up to scaling by $\alpha > 0$), then the adjoint lattice $\L^\circ$ can be replaced by the lattice $\L$ (appropriately scaled by $1/\alpha$).

\subsection{The Fundamental Identity of Gabor Analysis}

In what follows, it is convenient for us to introduce the \textit{modulation spaces} $M^p(\Rd)$, in particular the modulation space $M^1(\Rd)$, which is also known as Feichtinger's algebra $S_0(\Rd)$.
\begin{definition}[Modulation space]\label{def:Mp}
    The modulation space $M^p(\R)$ consists of all tempered distributions $f \in \mathcal{S}'(\R)$ such that
    \begin{equation}
        \norm{f}_{M^p(\R)} = \norm{V_{\varphi}f}_{L^p(\R^2)} < \infty.
    \end{equation}
\end{definition}
Note that $V_g g \in M^1(\R^2)$ if (and only if) $g \in M^1(\R)$ \cite[Cor.~5.5]{Jak18}. Hence, Feichtinger's algebra is a convenient setting for us, as it contains $\varphi$. Therefore, all manipulations in the sequel are justified and we do not run into any convergence issues (all series we encounter converge at least unconditionally).

The following (auxiliary) results we are going to present are well-known in Gabor analysis, but for the sake of completeness we provide the proof. We start with a result which is known as the \textit{Fundamental Identity of Gabor Analysis} (FIGA). It was first obtained by Janssen \cite{Jan98} for rectangular lattices and for general lattices it was shown by Feichtinger and Luef \cite{FeiLue06}. We also refer to the work of Gr\"ochenig and Koppensteiner \cite{GroKop19}, where all technical details are clarified. The proof uses the symplectic Poisson summation formula \eqref{eq:PSF} and the following orthogonality relations:
\begin{equation}\label{eq:OR}
         \langle V_{g_1} f_1, V_{g_2} f_2 \rangle_{L^2(\R^2)} = \langle f_1, f_2 \rangle_{L^2(\R)} \overline{\langle g_1, g_2 \rangle}_{L^2(\R)}.
\end{equation}
The proof of \eqref{eq:OR} is by straight forward computation and the interested reader is referred to the textbook of Gr\"ochenig \cite[Chap.~3.2]{Gro01}. Formula \eqref{eq:OR} is the STFT-version of Moyal's identity (see \eqref{eq:Moyal} below) for the Wigner distribution introduced in \cite{Moy49} (see also \cite[Chap.~6]{Gos17}, \cite[Chap.~4.3]{Gro01}).
\begin{proposition}[Fundamental Identity of Gabor Analysis]
    Assume we have functions $f_1, f_2, g_1, g_2 \in M^1(\R)$, then the following identity holds
    \begin{equation}\tag{FIGA}\label{eq:FIGA}
        \sum_{\l \in \L} V_{g_1} f_1(\l) \, \overline{V_{g_2} f_2 (\l)}
        = \frac{1}{\vol(\R^{2d}/\L)} \sum_{\l^\circ \in \L^\circ} V_{g_1} g_2(\l^\circ) \, \overline{V_{f_1} f_2(\l^\circ)}.
    \end{equation}
\end{proposition}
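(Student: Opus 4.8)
The plan is to recognize the left-hand side of \eqref{eq:FIGA} as the $\L$-periodization of a single function on phase space and then to invoke the symplectic Poisson summation formula \eqref{eq:symplectic_PSF}. Set
\[
  \Phi(z) = V_{g_1}f_1(z)\,\overline{V_{g_2}f_2(z)}, \qquad z\in\R^{2d},
\]
so the left-hand side is precisely $\sum_{\l\in\L}\Phi(\l)$. Because $f_1,f_2,g_1,g_2\in M^1(\Rd)$, each STFT $V_{g_i}f_i$ belongs to $M^1(\R^{2d})$, and since $M^1$ is a Banach algebra under pointwise multiplication, $\Phi\in M^1(\R^{2d})$ as well; this legitimizes all interchanges of summation and integration and lets \eqref{eq:symplectic_PSF} be applied pointwise. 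Evaluating \eqref{eq:symplectic_PSF} at $z=0$ reduces the claim to computing the symplectic Fourier transform $\F_\sigma\Phi$ along the adjoint lattice $\L^\circ$, with the weight $\vol(\R^{2d}/\L)^{-1}$ coming along for free.

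The heart of the argument is the pointwise identity
\[
  \F_\sigma\Phi(z) = V_{g_1}g_2(-z)\,\overline{V_{f_1}f_2(-z)}.
\]
First I would absorb the symplectic kernel $e^{-2\pi i\sigma(z',z)}$ into the conjugated factor by means of the covariance of the STFT: starting from the composition law $\pi(z')\pi(z)=e^{-2\pi i x'\cdot\omega}\,\pi(z'+z)$, which is an immediate consequence of \eqref{eq:comm_relations}, one checks that
\[
  V_{\pi(z)g_2}\big(\pi(z)f_2\big)(z') = e^{2\pi i\sigma(z',z)}\,V_{g_2}f_2(z'),
\]
so that $\overline{V_{g_2}f_2(z')}\,e^{-2\pi i\sigma(z',z)} = \overline{V_{\pi(z)g_2}(\pi(z)f_2)(z')}$. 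Substituting this into the definition of $\F_\sigma$ turns the defining integral into the $L^2(\R^{2d})$ inner product $\langle V_{g_1}f_1,\,V_{\pi(z)g_2}(\pi(z)f_2)\rangle$, at which point the orthogonality relations \eqref{eq:OR} apply verbatim and yield $\langle f_1,\pi(z)f_2\rangle\,\overline{\langle g_1,\pi(z)g_2\rangle}=V_{f_2}f_1(z)\,\overline{V_{g_2}g_1(z)}$.

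To bring this into the shape appearing on the right-hand side of \eqref{eq:FIGA} I would use the reflection relation $V_g f(z)=e^{-2\pi i x\cdot\omega}\,\overline{V_f g(-z)}$ on both factors; the two phase factors cancel and leave exactly $V_{g_1}g_2(-z)\,\overline{V_{f_1}f_2(-z)}$. Inserting this into the Poisson identity and reindexing the sum over $\L^\circ$ through $\l^\circ\mapsto-\l^\circ$, which is legitimate since $\L^\circ=-\L^\circ$, removes the reflection and produces precisely the right-hand side of \eqref{eq:FIGA}.

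I expect the only real difficulty to be bookkeeping rather than analysis: tracking the phase factors in the covariance and reflection identities so that they cancel exactly, and confirming that the normalizations of $\F_\sigma$ and of $\L^\circ$ reproduce the stated volume constant $\vol(\R^{2d}/\L)^{-1}$. Once $M^1$-membership secures absolute convergence, the passage through Fubini, \eqref{eq:OR}, and the lattice symmetry is routine.
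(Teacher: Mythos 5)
Your proof is correct and follows essentially the same route as the paper's: absorb the symplectic kernel into the conjugated STFT via the covariance relation, apply the orthogonality relations \eqref{eq:OR} to identify $\F_\sigma\bigl(V_{g_1}f_1\,\overline{V_{g_2}f_2}\bigr)$, and then invoke the symplectic Poisson summation formula \eqref{eq:symplectic_PSF}. If anything, you are more careful than the paper's own write-up, which passes silently from $V_{f_2}f_1(\xi)\,\overline{V_{g_2}g_1(\xi)}$ to the stated right-hand side without recording the reflection identity and the reindexing $\l^\circ \mapsto -\l^\circ$ that you make explicit.
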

\begin{proof}
    As we assume all functions to be in the modulation space $M^1(\R)$, all of the following manipulations are justified. We observe that by the commutation relation \eqref{eq:comm_relations} we have
    \begin{equation}
        V_{\pi(\xi)g}(\pi(\xi)f)(z) = V_g f(z) e^{-2 \pi i \sigma (\xi,z)}.
    \end{equation}
    As it may easily be overlooked in the following calculation, we note that we use the fact that $\sigma(z,\xi) = - \sigma(\xi,z)$ and so $\overline{e^{-2 \pi i \sigma(\xi,z)}} = e^{-2 \pi i \sigma(z,\xi)}$. Hence, we have
    \begin{align}
        \mathcal{F}_\sigma ( V_{g_1}f_1 \ V_{g_2} f_2)(\xi) & = \iint_{\R^2} V_{g_1} f_1 (z) \overline{V_{g_2} f_2(z)} e^{-2 \pi i \sigma(z, \xi)} \, d z\\
        & = \iint_{\R^2}  V_{g_1}f_1(z) \overline{V_{\pi(\xi)g_2}(\pi(\xi)f_2)(z)} \, d z\\
        & = \langle V_{g_1} f_1, V_{\pi(\xi) g_2} (\pi(\xi)f_2) \rangle_{L^2(\R^2)}\\
        & = \langle f_1, \pi(\xi) f_2 \rangle_{L^2(\R)} \ \overline{\langle g_1, \pi(\xi) g_2 \rangle}_{L^2(\R)}
        = V_{f_2} f_1(\xi) \ \overline{V_{g_2} g_1(\xi)}.
    \end{align}
    Equation \eqref{eq:FIGA} now follows by applying the symplectic Poisson summation formula \eqref{eq:symplectic_PSF}:
    \begin{align}
        \sum_{\l \in \L} V_{g_1} f_1(\l) \, \overline{V_{g_2} f_2 (\l)}
        & = \vol(\L)^{-1} \sum_{\l^\circ \in \L^\circ} \mathcal{F}_\sigma \left(V_{g_1} f_1(\l) \, \overline{V_{g_2} f_2 (\l)}\right)\\
        & = \vol(\L)^{-1} \sum_{\l^\circ \in \L^\circ} V_{f_1} f_2(\l^\circ) \, \overline{V_{g_1} g_2(\l^\circ)}
    \end{align}
\end{proof}
We continue with the following observation, which is a slight variation of \cite[Lem.~4.2.1]{Gro01} in the textbook of Gr\"ochenig. The proof employs the Cauchy-Schwarz inequality, and when equality can be achieved.
\begin{lemma}\label{lem:Vgg}
    For non-vanishing $g \in L^2(\R)$ we note that $V_g g(0,0) \in \R_+$. Moreover, we have
    \begin{equation}
        V_g g(0,0) = \norm{g}_{L^2(\R)}^2
    \end{equation}
    and 
    the following estimate holds for all $(x,\omega) \in \R^2 \setminus \{(0,0)\}$:
    \begin{equation}
        |V_g g(x,\omega)| < V_g g(0,0).
    \end{equation}
\end{lemma}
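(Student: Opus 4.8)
The plan is to read off the first two claims directly from the definition of the STFT and then to obtain the strict inequality through a careful analysis of the equality case in the Cauchy--Schwarz inequality.

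First I would evaluate the STFT on the diagonal. By \eqref{stft} we have $V_g g(x,\omega) = \langle g, M_\omega T_x g\rangle$, so at the origin $V_g g(0,0) = \langle g, g\rangle = \norm{g}_{L^2(\R)}^2$, which is a positive real number since $g$ is non-vanishing. This settles the first two assertions at once. Next I would apply Cauchy--Schwarz: because $M_\omega$ and $T_x$ are unitary, $\norm{M_\omega T_x g}_{L^2(\R)} = \norm{g}_{L^2(\R)}$, and hence
\begin{equation*}
    |V_g g(x,\omega)| = |\langle g, M_\omega T_x g\rangle| \le \norm{g}_{L^2(\R)} \, \norm{M_\omega T_x g}_{L^2(\R)} = \norm{g}_{L^2(\R)}^2 = V_g g(0,0).
\end{equation*}
This already yields the nonstrict inequality for every $(x,\omega)$; the genuine content of the lemma is to upgrade it to a strict inequality away from the origin.

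The main obstacle, therefore, is to rule out equality in Cauchy--Schwarz whenever $(x,\omega) \neq (0,0)$. Equality forces $M_\omega T_x g = c\, g$ for some $c \in \C$, and taking norms gives $|c| = 1$. I would then exploit the resulting pointwise identity $e^{2\pi i \omega t} g(t-x) = c\, g(t)$ in two steps. Taking absolute values shows that $|g|$ is periodic with period $x$; if $x \neq 0$, such a nonzero periodic function cannot lie in $L^2(\R)$, since its squared $L^2$-mass over a single period would then be repeated over infinitely many translates, forcing $g = 0$ and contradicting non-vanishing. Hence $x = 0$.

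With $x = 0$ the identity reduces to $e^{2\pi i \omega t} g(t) = c\, g(t)$, so on the support of $g$ — a set of positive measure, again because $g$ is non-vanishing — one must have $e^{2\pi i \omega t} = c$ for almost every $t$. If $\omega \neq 0$ the left-hand side is nonconstant and can agree with the fixed constant $c$ only on a null set, a contradiction. Thus equality is attainable only at $(x,\omega) = (0,0)$, which delivers the strict inequality $|V_g g(x,\omega)| < V_g g(0,0)$ everywhere else and completes the proof. I expect the periodicity-versus-$L^2$ argument to be the delicate point, since it is exactly where the integrability of $g$ is used in an essential way.
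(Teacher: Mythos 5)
Your proof is correct and follows essentially the same route as the paper: Cauchy--Schwarz with unitarity of the time-frequency shift, followed by an analysis of the equality case that rules out $x \neq 0$ via periodicity of $|g|$ versus $L^2$-integrability. The only (harmless) difference is in the case $\omega \neq 0$: the paper transfers the periodicity argument to $\F g$ using $\F T_x = M_{-x}\F$, whereas you argue directly that $e^{2\pi i \omega t} = c$ cannot hold on the positive-measure support of $g$ — both are fine.
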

\begin{proof}
    Clearly, $V_g g(0,0) = \norm{g}_{L^2(\Rd)}^2 \in \R_+$ by definition. We use the Cauchy-Schwarz inequality and the fact that $\pi(z)$ is unitary to get
    \begin{equation}
        |\langle g, \pi(z) g \rangle| \leq \norm{g}_{L^2(\Rd)}^2.
    \end{equation}
    Equality holds if and only if $\pi(z) g = c g$, $|c|=1$. For $z=(x,\omega)$, if $x \neq 0$, then $T_x |g| = |c \, \pi(z) \, g| = |c \, g| = |g|$ implies that $g$ is periodic, contradicting $g \in \Lt$. As $\F T_\omega = M_{-\omega} \F$ and $\F$ is unitary, we get that $\F g$ would need to be periodic in order to achieve equality if $\omega \neq 0$. Hence, the result follows.
\end{proof}
The following notion was introduced in \cite{TolOrr95}. A function $g$ satisfies \textit{Condition A} if
\begin{equation}\label{eq:cond_A}
    \sum_{\l^\circ \in \L^\circ} | V_g g (\l^\circ) | < \infty. \tag{Condition A}
\end{equation}
The next result was shown by Tolimieri and Orr \cite{TolOrr95} under the assumption that the window function $g$ indeed satisfies \eqref{eq:cond_A}. We only add the convenient choice that we have $g \in M^1(\R)$, so that we can directly use the result.
\begin{lemma}\label{lem:Vgg_bound}
    Let $g \in M^1(\R)$, then it satisfies \eqref{eq:cond_A} and, consequently, the Gabor system $\mathcal{G}(g,\L)$ has a finite upper frame bound $B$, which is at most
    \begin{equation}\label{eq:B}
        B \leq \vol(\L)^{-1} \sum_{\l^\circ \in \L^\circ} |V_g g(\l^\circ)| < \infty.
    \end{equation}
\end{lemma}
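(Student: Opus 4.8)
The plan is to establish the two assertions of the lemma in turn: first that membership $g \in M^1(\R)$ already forces \eqref{eq:cond_A}, and then that this summability, fed through \eqref{eq:FIGA}, produces the explicit upper frame bound \eqref{eq:B}. For \eqref{eq:cond_A} I would start from the remark immediately preceding the lemma: since $g \in M^1(\R)$ we have $V_g g \in M^1(\R^2)$. The key structural fact is that every function in Feichtinger's algebra $M^1(\R^2)$ has absolutely summable samples along any lattice; equivalently, $M^1$ embeds continuously into the Wiener amalgam space $W(C_0,\ell^1)$, on which the sampling map $F \mapsto (F(\l^\circ))_{\l^\circ \in \L^\circ}$ takes values in $\ell^1(\L^\circ)$. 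Applying this with $F = V_g g$ gives exactly $\sum_{\l^\circ \in \L^\circ} |V_g g(\l^\circ)| < \infty$, which is \eqref{eq:cond_A}.

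For the frame bound I would invoke \eqref{eq:FIGA} with the symmetric choice $f_1 = f_2 = f$ and $g_1 = g_2 = g$. Recalling that $V_g f(\l) = \langle f, \pi(\l) g\rangle$, the left-hand side of \eqref{eq:FIGA} becomes precisely the frame sum, and the identity reads
\begin{equation}
    \sum_{\l \in \L} |\langle f, \pi(\l) g\rangle|^2 = \vol(\L)^{-1} \sum_{\l^\circ \in \L^\circ} V_g g(\l^\circ) \, \overline{V_f f(\l^\circ)}.
\end{equation}
Taking absolute values and using the triangle inequality yields the estimate $\sum_{\l \in \L} |\langle f, \pi(\l) g\rangle|^2 \le \vol(\L)^{-1} \sum_{\l^\circ \in \L^\circ} |V_g g(\l^\circ)| \, |V_f f(\l^\circ)|$.

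To finish I would apply Lemma~\ref{lem:Vgg} with $f$ in place of $g$, which gives the pointwise domination $|V_f f(\l^\circ)| \le V_f f(0,0) = \norm{f}_{L^2(\R)}^2$ valid for every $\l^\circ \in \L^\circ$ (trivially so when $f$ vanishes). Factoring this constant out of the sum leaves
\begin{equation}
    \sum_{\l \in \L} |\langle f, \pi(\l) g\rangle|^2 \le \left( \vol(\L)^{-1} \sum_{\l^\circ \in \L^\circ} |V_g g(\l^\circ)| \right) \norm{f}_{L^2(\R)}^2,
\end{equation}
so the optimal upper frame bound $B$ is dominated by the bracketed quantity, which is finite by \eqref{eq:cond_A}. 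This is exactly \eqref{eq:B}.

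The routine pieces here—the FIGA substitution, the triangle inequality, and the pointwise estimate from Lemma~\ref{lem:Vgg}—are immediate. The one genuinely nontrivial ingredient is the summability claim \eqref{eq:cond_A}, whose justification rests on the $\ell^1$-sampling (Wiener amalgam) property of $M^1$; this is the step I would treat most carefully, since it is what converts the qualitative hypothesis $g \in M^1$ into the quantitative finiteness that makes the bound meaningful, and it is precisely the hypothesis under which Tolimieri and Orr~\cite{TolOrr95} originally derived \eqref{eq:B}.
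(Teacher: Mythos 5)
Your proof is correct and follows essentially the same route as the paper's: apply \eqref{eq:FIGA} with $f_1=f_2=f$, $g_1=g_2=g$, use the triangle inequality, and then dominate $|V_f f(\l^\circ)|$ by $V_f f(0,0)=\norm{f}_{L^2(\R)}^2$ via Lemma~\ref{lem:Vgg}. Your justification of \eqref{eq:cond_A} through the $\ell^1$-sampling (Wiener amalgam) property of $M^1$ is in fact more explicit than the paper, which simply asserts finiteness from $g \in M^1(\R)$.
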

\begin{proof}
    The proof follows from Lemma \ref{lem:Vgg} and \eqref{eq:FIGA}. We compute
    \begin{align}
        \sum_{\l \in \L} |\langle f, \pi(\l) g \rangle|^2 
        & = \sum_{\l \in \L} V_{g} f(\l) \overline{V_g f(\l)}
        = \vol(\L)^{-1} \sum_{\l^\circ \in \L^\circ} V_f f(\l^\circ) \overline{V_g g(\l^\circ)}\\
        & \leq \vol(\L)^{-1} \sum_{\l^\circ \in \L^\circ} |V_f f(\l^\circ)| |V_g g(\l^\circ)|
        \leq \vol(\L)^{-1} \sum_{\l^\circ \in \L^\circ} |V_f f(0)| |V_g g(\l^\circ)|\\
        & = \left(\vol(\L)^{-1} \sum_{\l^\circ \in \L^\circ}|V_g g(\l^\circ)| \right) \norm{f}_{L^2(\R)}^2.
    \end{align}
    Avoiding technicalities ($f$ is not necessarily in $M^1(\R)$), which are however clarified in \cite{FeiLue06} or \cite{GroKop19}, we see that the frame inequality \eqref{eq:frame} if fulfilled for $f \in L^2(\R)$ with an upper frame bound given by \eqref{eq:B}. The bound is indeed finite as $g \in M^1(\R)$.
\end{proof}
The last result we present in this section is the \textit{Janssen representation} of the frame operator, which shows that it is the quantum theta $\vol(\L)^{-1} \Theta_{\L^\circ}$.
\begin{lemma}[Janssen representation]
    For a window $g \in M^1(\R)$ and a lattice $\L$, the frame operator $S = S_{g,\L}$ can be expressed as
    \begin{equation}
        S = \vol(\L)^{-1} \sum_{\l^\circ \in \L^\circ} \langle g, \pi(\l^\circ) g \rangle \pi(\l^\circ).
    \end{equation}
\end{lemma}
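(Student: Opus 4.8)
The plan is to verify the asserted operator identity weakly, by testing both sides against arbitrary functions and reducing everything to the Fundamental Identity of Gabor Analysis \eqref{eq:FIGA}, which is already at our disposal. Since $g \in M^1(\R)$, Lemma \ref{lem:Vgg_bound} shows that the frame operator $S$ is bounded, and the operator on the right of the claim is bounded as well: each $\pi(\l^\circ)$ is unitary, while by \eqref{eq:cond_A} the coefficient sequence $\bigl(\langle g, \pi(\l^\circ) g \rangle\bigr)_{\l^\circ} = \bigl(V_g g(\l^\circ)\bigr)_{\l^\circ}$ is absolutely summable. Hence it suffices to establish the identity as an equality of sesquilinear forms on the dense subspace $M^1(\R)$, and no separate convergence discussion is needed.

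First I would write out the sesquilinear form of the frame operator. For $f, h \in M^1(\R)$, using the defining relation $V_g f(z) = \langle f, \pi(z) g \rangle$ from \eqref{stft}, one has
\begin{equation}
    \langle S f, h \rangle = \sum_{\l \in \L} \langle f, \pi(\l) g \rangle \, \langle \pi(\l) g, h \rangle = \sum_{\l \in \L} V_g f(\l) \, \overline{V_g h(\l)}.
\end{equation}
The point is that the right-hand side is \emph{exactly} the left-hand side of \eqref{eq:FIGA} under the substitution $f_1 = f$, $g_1 = g$, $f_2 = h$, $g_2 = g$. Applying \eqref{eq:FIGA} then yields
\begin{equation}
    \langle S f, h \rangle = \vol(\L)^{-1} \sum_{\l^\circ \in \L^\circ} V_g g(\l^\circ) \, \overline{V_f h(\l^\circ)}.
\end{equation}

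It then remains to recognize this as the form of the claimed operator. Writing $V_g g(\l^\circ) = \langle g, \pi(\l^\circ) g \rangle$ and $\overline{V_f h(\l^\circ)} = \overline{\langle h, \pi(\l^\circ) f \rangle} = \langle \pi(\l^\circ) f, h \rangle$, the previous display equals
\begin{equation}
    \left\langle \vol(\L)^{-1} \sum_{\l^\circ \in \L^\circ} \langle g, \pi(\l^\circ) g \rangle \, \pi(\l^\circ) f, \; h \right\rangle,
\end{equation}
which is precisely $\langle \widetilde S f, h \rangle$ for the operator $\widetilde S$ on the right of the stated identity. Since $f, h$ range over the dense subspace $M^1(\R)$ and both $S$ and $\widetilde S$ are bounded, I conclude $S = \widetilde S$.

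The computation is routine, so I do not expect a genuine obstacle; the one place that demands care is the bookkeeping of complex conjugates, ensuring that the conjugation coming out of \eqref{eq:FIGA} lands on $V_f h$ and that $\overline{V_f h(\l^\circ)} = \langle \pi(\l^\circ) f, h \rangle$ (and not its conjugate), so that the Janssen-side form matches verbatim. The only analytic subtlety — unconditional convergence of both series and the passage from the weak identity to an identity of bounded operators — is already guaranteed by the $M^1$ hypothesis together with Lemma \ref{lem:Vgg_bound} and \eqref{eq:cond_A}.
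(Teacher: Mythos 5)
Your proof is correct and follows essentially the same route as the paper: test $S$ weakly against $f,h\in M^1(\R)$, apply \eqref{eq:FIGA} with $f_1=f$, $f_2=h$, $g_1=g_2=g$, and recognize the resulting sum as the sesquilinear form of the Janssen-side operator. Your direct identification $\overline{V_f h(\l^\circ)}=\langle \pi(\l^\circ)f,h\rangle$ is in fact a cleaner piece of bookkeeping than the paper's detour through the relation $V_h f(z)=e^{-2\pi i x\omega}\,\overline{V_f h(-z)}$ and the symmetry $\l\in\L \Leftrightarrow -\l\in\L$.
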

\begin{proof}
    The result is well-known and it follows basically from \eqref{eq:FIGA}. We compute
    \begin{equation}
        \langle S f, h \rangle = \sum_{l \in \L} V_g f(\l) \ \overline{V_g h(\l)} = \vol(\L)^{-1} \sum_{\l^\circ \in \L^\circ} \langle g, \pi(\l^\circ) g \rangle \langle \pi(\l) f, h \rangle,
    \end{equation}
    where we used \eqref{eq:FIGA}, the fact that $V_h f(z) = e^{-2 \pi i x \omega} \overline{V_f h(-z)}$, $z=(x,\omega)$ and that $\L$ is a lattice, which implies that $\l \in \L$ if and only if $-\l \in \L$. Hence, we obtain the result in the weak operator topology.
\end{proof}

An important consequence of Janssen's representation is the following duality principle
\begin{lemma}[Ron-Shen duality principle]\label{duality}
Let $g \in L^2(\R^d)$. Then the Gabor system $\{g_\lambda, \lambda \in \Lambda\}$
is a frame for $L^2(\R^d)$ if and only if $\{g_{\lambda^\circ}, \lambda^\circ \in \Lambda^\circ\}$ is a Riesz basis
for its closed linear span.
\end{lemma}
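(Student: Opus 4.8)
The plan is to reduce both properties to the invertibility of a single self-adjoint operator and then to compare its two natural realizations. By definition, $\mathcal{G}(g,\L)$ is a frame exactly when its frame operator $S=S_{g,\L}$ satisfies $A\,I\preccurlyeq S\preccurlyeq B\,I$ for some $0<A\le B<\infty$, i.e.\ $S$ is bounded and boundedly invertible on $\Lt$. On the other side, $\{g_{\l^\circ}\}_{\l^\circ\in\L^\circ}$ is a Riesz basis for its closed span precisely when its Gram operator $G^\circ$ on $\ell^2(\L^\circ)$, with entries $G^\circ_{\l^\circ,\mu^\circ}=\langle\pi(\mu^\circ)g,\pi(\l^\circ)g\rangle$, is bounded and boundedly invertible on all of $\ell^2(\L^\circ)$ — the Riesz bounds being exactly the extreme points of the spectrum of $G^\circ$. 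First I would therefore restate the lemma as the single equivalence: $S$ is invertible on $\Lt$ if and only if $G^\circ$ is invertible on $\ell^2(\L^\circ)$.

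The link between the two is the Janssen representation just proved. Set $a_{\l^\circ}=\langle g,\pi(\l^\circ)g\rangle$, a sequence indexed by the adjoint lattice. Then $\vol(\L)\,S=\sum_{\l^\circ\in\L^\circ}a_{\l^\circ}\,\pi(\l^\circ)=\Theta_{\L^\circ}$, the quantum theta of \eqref{eq:quantum_theta}, realized on $\Lt$ through the Schr\"odinger representation. On the sequence side, the composition rule $\pi(\mu^\circ)^\ast\pi(\l^\circ)=c(\l^\circ,\mu^\circ)\,\pi(\l^\circ-\mu^\circ)$ with a unimodular cocycle $c$ turns the Gram entries into $G^\circ_{\l^\circ,\mu^\circ}=c(\l^\circ,\mu^\circ)\,a_{\l^\circ-\mu^\circ}$, so that $G^\circ$ is precisely the twisted-convolution operator on $\ell^2(\L^\circ)$ with the \emph{same} symbol $(a_{\l^\circ})$. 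In other words, $\vol(\L)\,S$ and $G^\circ$ are the images of one and the same self-adjoint element of the quantum torus $\mathcal{T}_{\L^\circ}$ — the twisted group algebra of $\L^\circ$ — under two faithful $\ast$-representations: the Schr\"odinger representation on $\Lt$ and the twisted left-regular representation on $\ell^2(\L^\circ)$.

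The concluding step is spectral invariance. Faithful $\ast$-representations of a $C^\ast$-algebra are isometric and their images are spectrally invariant (inverse-closed) unital $C^\ast$-subalgebras; hence an element of $\mathcal{T}_{\L^\circ}$ is invertible if and only if each of its two representations is invertible, and in fact $\vol(\L)\,S$ and $G^\circ$ share the same spectrum. Together with the reformulation of the first paragraph this gives exactly the claimed equivalence. An alternative, equally viable route runs through the Wexler--Raz biorthogonality relations, which identify the canonical dual window of a frame $\mathcal{G}(g,\L)$ with the system biorthogonal to $\{g_{\l^\circ}\}$ on $\L^\circ$; however, the representation-theoretic argument is the one that fits the quantum-torus viewpoint of this paper most naturally.

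The hard part is technical rather than conceptual, and it is twofold. First, the lemma allows an arbitrary $g\in\Lt$, for which the series defining $\Theta_{\L^\circ}$ and the matrix $G^\circ$ need not converge absolutely; one must run each direction under the Bessel bound furnished by the hypothesis in force — a frame is automatically Bessel on $\L$, a Riesz sequence automatically Bessel on $\L^\circ$ — and invoke the fact, established in \cite{Gro01} and clarified in \cite{FeiLue06,GroKop19}, that the Janssen representation and the identification of $\mathcal{T}_{\L^\circ}$ with a twisted-convolution algebra survive at this level of generality. Second, the faithfulness of the two representations and the inverse-closedness of the $C^\ast$-algebras they generate must be checked with care; these are the only genuinely nontrivial inputs, and they are precisely the noncommutative-torus facts alluded to earlier.
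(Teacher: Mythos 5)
The paper itself offers no proof of this lemma --- it records it as a known consequence of the Janssen representation --- so there is no in-paper argument to measure you against; I can only assess your proposal on its own terms. Your route (realize $\vol(\Lambda)S$ and the Gram operator $G^\circ$ as images of the single element $a=(\langle g,\pi(\lambda^\circ)g\rangle)_{\lambda^\circ\in\Lambda^\circ}$ of the twisted group algebra of $\Lambda^\circ$ under the Schr\"odinger representation and the twisted left-regular representation, then invoke spectral invariance) is a legitimate and known strategy, essentially Rieffel's, later elaborated by Luef in the noncommutative-torus language this paper favors. Your identification of $G^\circ$ as the twisted-convolution operator with symbol $a$ is correct, as is the resulting relation $\mathrm{spec}(G^\circ)=\vol(\Lambda)\,\mathrm{spec}(S)$ when everything converges.

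The gap is that the step you dismiss as ``technical rather than conceptual'' is the entire mathematical content of the theorem, and your key argument does not apply at the stated level of generality. Spectral invariance under faithful $*$-representations is a $C^*$-algebra fact: it applies to elements \emph{of} the $C^*$-algebra. For arbitrary $g\in L^2(\R^d)$ the sequence $a$ need not lie in $\ell^1(\Lambda^\circ)$ (Condition A can fail), so $\sum_{\lambda^\circ} a_{\lambda^\circ}\pi(\lambda^\circ)$ is not exhibited as an element of the noncommutative-torus $C^*$-algebra; at best $S$ lies in the von Neumann algebra generated by $\{\pi(\lambda^\circ)\}$, and for two a priori unrelated realizations of a von Neumann algebra element, equality of spectra is not automatic. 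Establishing it here requires the commutation theorem identifying $\{\pi(\lambda)\}''$ and $\{\pi(\lambda^\circ)\}'$ together with a standard-form argument --- or, equivalently, the Wexler--Raz biorthogonality relations --- which is precisely what Ron--Shen, Daubechies--Landau--Landau, and Janssen actually prove. Faithfulness of the Schr\"odinger representation restricted to $\Lambda^\circ$ is likewise not free in the ``rational'' case, where the twisted group $C^*$-algebra is not simple. So treat your write-up as a correct roadmap rather than a proof: either supply the von Neumann-algebraic inputs explicitly, or switch to the Wexler--Raz route you mention in passing, which handles general $g\in L^2(\R^d)$ directly and is the standard elementary argument.
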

In other words,  $S$ is the frame operator for 
$\{g_\lambda, \lambda \in \Lambda\}$ with frame bounds $A,B>0$ if and only if the Gram matrix $G^\circ$ for $\{g_{\lambda^\circ}, \lambda^\circ \in \Lambda^\circ\}$ satisfies $ AI \preccurlyeq G^\circ \preccurlyeq B I$.
An easy but useful consequence of Lemma~\ref{duality} is the following basic observation: The family $\{\gamma_\lambda, \lambda \in \Lambda\}$ is a dual Gabor frame for 
$\{g_\lambda, \lambda \in \Lambda\}$ if and only if 
$\{\gamma_{\lambda^\circ}, \lambda^\circ \in \Lambda^\circ\}$ is a dual Gabor Riesz basis for $\{g_{\lambda^\circ}, \lambda^\circ \in \Lambda^\circ\}$.

\section{Quantum pavings and wireless communications}\label{s:ofdm}

The problem of quantum pavings arises also in non-quantum settings. One such example occurs in modern wireless communications~\cite{Liu2020,Liya,StrBea03}.
Transmission over wireless channels is subject to time dispersion due to
multipath propagation as well as frequency dispersion due to the Doppler
effect~\cite{Molisch}. While Orthogonal Frequency Division Multiplexing (OFDM) is currently the communication system of choice, it is expected that future wireless communication standards such as 6G will also feature {\em pulse-shaping OFDM}, cf. e.g.~\cite{Liu2020,Liya}. Skipping engineering details, from a mathematical viewpoint there are deep connections between such a pulse-shaping OFDM system and Gabor Riesz bases (the latter are, via Lemma~\ref{duality}, dual to Gabor frames). Here, the pulse shape plays the role of the Gabor window. 

Indeed, the transmission functions of a standard pulse shaping OFDM system consist of translations and modulations of a single pulse shape $f$, i.e.,
\begin{equation}
\label{ofdmsystem}
g_{kl}(t)=g(t-kT) e^{2\pi i lFt}, 
\end{equation}
for some $T,F>0, k,l \in \Z$, where
 $T$ is called the symbol period and $F$ is the carrier separation, see~\cite{Bolcskei,KozekMolisch,StrBea03} for details.
The functions in~\eqref{ofdmsystem} are identical to a Gabor system\footnote{In reality the index range for $k$ and $l$ in~\eqref{ofdmsystem} is large but finite. However, this difference is negligible  for our analysis.}.

Let $c_{k,l}, k,l\in \Z$  be the (already block-coded)
data symbols to be transmitted. The OFDM transmission signal is then given by
\begin{equation}
\label{OFDMsignal}
s(t)= \sum_{k,l} c_{k,l} g_{k,l}.
\end{equation}
The received signal is (ignoring additive noise)
\begin{equation}
r(t) = (H s)(t)
= \int \limits_{-\infty}^{+\infty}  h(t,\tau) s(t-\tau) d\tau,
\label{eq5}
\end{equation}
where $h(t,\tau)$ is the impulse response of the time-varying channel
$H$.
The data symbol $c_{k,l}$ is transmitted at the $(k,l)$-th lattice point
of the rectangular time-frequency lattice $(kT,lF), k,l \in \Z$.

An important point is that in order to reconstruct the data $c_{k,l}$  from the received signal $r$, we require the transmission functions to be linear independent. This means we are dealing with Gabor Riesz bases (for a subspace of $L^2(\R)$)) instead of Gabor frames.

While the data $c_{kl}$ live in the discrete world, the signals $s$ and $r$ live in the continuous (analog) world. The received signal $r$ is transformed back into the discrete world by computing the inner product of $r$ with a dual Riesz basis $\{f_{k,l}\}$,
i.e., we obtain\footnote{The interchange of summation and integration in~\eqref{innprod} is justified under mild smoothness and decay conditions.}
\begin{equation}
d_{k,l}=\langle r ,f_{k,l} \rangle
=\langle H s ,f_{k,l} \rangle 
= \sum_{k',l'} c_{k',l'} \langle H g_{k',l'},
f_{k,l} \rangle.
\label{innprod}
\end{equation}
The process of reconstructing of the transmitted data $c_{k,l}$ from the received distorted data $d_{k,l}$ is called {\em equalization}.
In the ideal (but unrealistic) case that the wireless channel does not
introduce any distortion (i.e., $H=I$) we obtain
$d_{k,l} = c_{k,l}$ for all $k,l$,
owing to the biorthogonality property $\langle g_{k',l'} , f_{k,l} \rangle = \delta_{kl,k'l'}$.

However, in practice wireless channels are time-dispersive and
frequency-dispersive. Hence when the transmission pulses $g_{kl}$ pass through the 
channel $H$, they get spread out in time and frequency, causing
interference in the transmitted data.  This
interference can be reduced if the pulse shape is well-localized in
time and frequency. 
Hence, a natural choice is to set $g=\phi$,~\cite{KozekMolisch}.

To further improve the stability of the pulse-shaping OFDM system against
channel interference we have to increase the distance between adjacent data symbols by increasing the distance between adjacent grid points.
However increasing $T$ and/or $F$ results in an undesirable loss of data rate, which is captured by the quantity $\frac{1}{TF}$. Hence we are concerned with the following question, whether we can increase the distance between adjacent grid points
without reducing the data rate?

The above considerations motivate the introduction of Lattice-OFDM which is based on general time-frequency lattices, see~\cite{StrBea03}. Thus, our transmission functions  now take the form 
$\{\phi_\lambda, \lambda \in \Lambda\}$, where $\Lambda$ is a lattice in $\R^2$. But which lattice shall we choose? Naturally, the hexagonal lattice comes to mind.
See Figure~\ref{fig:rect} for a visual demonstration of the benefits of the hexagonal lattice over the rectangular lattice. But is the hexagonal lattice the best choice?

\begin{figure}[ht]
\begin{center}
\includegraphics[width=0.475\textwidth]{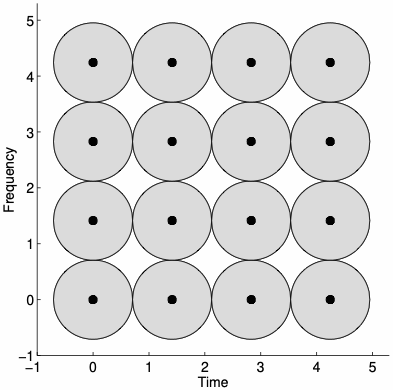}
\hspace{\fill}
\includegraphics[width=0.475\textwidth]{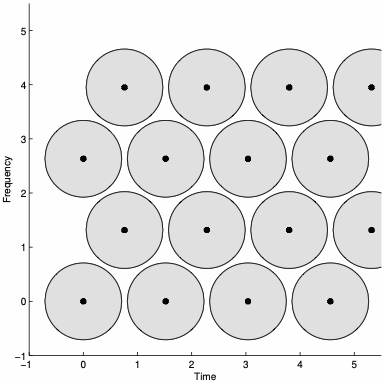}
\caption{A pulse-shaping OFDM system using a Gaussian pulse with rectangular lattice (left) and hexagonal lattice (right). The data symbols $c_{k,l}$ are transmitted at the lattice points $\lambda \in \Lambda$. The spheres represent the effective support $\supp_\eps  (F) = \{z \mid F(z) > \varepsilon\}$ of the STFT $V_{g} \pi(\lambda)g, \lambda \in \Lambda$ for both the square and the hexagonal lattice, respectively. The choice of $\eps$ is such that the difference between the two lattices becomes apparent. The use of the hexagonal lattice clearly increases the distance between adjacent spheres as compared to the square lattice, thus leading to reduced interference for time-frequency dispersive channels.}
\label{fig:rect}
\end{center}
\end{figure}

We can rephrase this question  also as follows: For a lattice $\Lambda$ of fixed density, we consider the associated Gram matrix $G=G(\Lambda)$ with entries 
$G_{\lambda,\nu} = \langle \pi(\nu) \phi, \pi(\lambda) \phi \rangle$, where $ \lambda, \nu \in \Lambda$. Based on our normalization of $\phi$, we want to find a lattice $\Lambda$ such that 
 the off-diagonal entries of $G(\Lambda)$ are as close to zero as possible. Heuristically, this should also manifest itself in a smaller condition number of $G$.

Furthermore, for a given Gabor frame $\{\phi_\lambda, \lambda \in \Lambda\}$, the dual window $\psi =S^{-1}\phi$ has exponential decay in time and frequency~\cite{Jan96,Str01}\footnote{Using techniques from~\cite{GroStr07} it is not difficult to extend the decay results in~\cite{Jan96,Str01} for Gabor frames defined for rectangular lattices to general lattices.}, where the constant in the exponent depends on the condition number of the frame operator $S$. Thus, in order to optimize the time-frequency localization of the dual window $\psi$ we can now ask, which lattice $\Lambda$ of fixed density minimizes the condition number of the frame operator $S$ associated with
$\{\phi_\lambda, \lambda \in \Lambda\}$? 

The above considerations lead us exactly to the quantum paving problem for $d=1$ (i.e., phase space dimension 2). 
In the next section we will dive deeper into the connections between quantum paving and sphere packings and coverings.

\section{Packing and covering: why and how?}

We will now explain how classical sphere packing and covering lead to quantum packing, quantum covering and quantum paving. The problems stem from M.F.'s study and collaborative successful attempt with L.~Bétermin and S.~Steinerberger to find an optimal sampling strategy for Gaussian Gabor systems in $\Lt[]$ \cite{BetFauSte21} (see also \cite{Fau18}, \cite{FauSte17}). Optimality is measured in terms of the condition number of the frame operator, i.e., the ratio $B_\L/A_\L$.

During a workshop at the ESI in Vienna more than 20 years ago, it was suggested by T.S.\ that optimal sphere packings could be of relevance for this task. Indeed, the conjecture raised in \cite{StrBea03} (see also \cite{AbrDoe12, StrHea03}) was that a hexagonal lattice should minimize the frame condition number $B_\L/A_\L$. A key insight by M.F., twenty years later, was that optimal sphere coverings should also be considered, leading to the collaborative solution of M.F.\ with L.~Bétermin and S.~Steinerberger \cite{BetFauSte21}.

So, how do optimal sphere packings and coverings meet Gabor frames? We may relax \eqref{eq:frame} by only considering elements from the set $g = \{\pi(z) \varphi \mid z \in \R^{2d} \}$ of time-frequency shifted Gaussians.
A brief computation shows that $|\langle \pi(z) \varphi, \pi(\l) \varphi \rangle|^2 = e^{-\pi (\l-z)^2}$ (see \cite[Chap.~1.5]{Gro01}).
\begin{equation}\label{eq:relaxed}
	\text{For } f = \pi(z) \varphi \text{ the frame inequality \eqref{eq:frame} becomes: }
	\hspace*{1.cm}
	A_\L \leq \sum_{\l \in \L} e^{-\pi |\l-z|^2} \leq B_\L, \quad \forall z \in \R^{2d}.
\end{equation}
Inequality \eqref{eq:relaxed} only yields approximations to the exact constants $A_\L$ and $B_\L$, which are nonetheless quite accurate and sometimes sharp (see \cite{BetFauSte21, Jan96, TolOrr95}). For lattices, the constants in \eqref{eq:relaxed} ask for the optimal polarization constant $A_\L$ and the optimal energy minimization constant $B_\L$. Writing the lattice $\L$ as $\L = \alpha^{1/2} \L_0 \subset \R^{2d}$, where $\L_0$ has density 1 (so $\den(\L) = \alpha^{-d})$, \eqref{eq:relaxed} can be written as
\begin{equation}
    A_{\L_0} \leq \sum_{\l_0 \in \L_0} e^{- \pi \alpha |\l - z|^2} \leq B_{\L_0}, \quad \forall z \in \R^{2d}, \ \alpha > 0.
\end{equation}
The constants $A_{\L_0}$ and $B_{\L_0}$ now also depend on the scaling parameter $\alpha$. For $\alpha \to \infty$, the lattice maximizing $A_{\L_0}$ is the one with the best covering property \cite{BetFauSte21}, \cite{HarPetSaf22} and the lattice minimizing $B_{\L_0}$ needs to solve the sphere packing problem \cite{Coh-Via22}.
\begin{figure}[ht]
	\includegraphics[width=.475\textwidth]{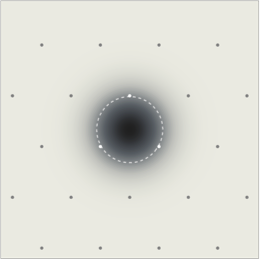}
	\hspace{\fill}
	\includegraphics[width=.475\textwidth]{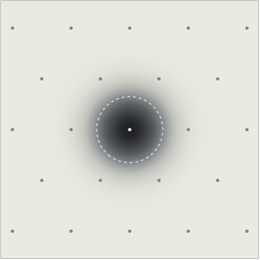}
	\caption{The function $|V_\varphi \varphi|^2$ in the time-frequency plane, centered at a deep hole (left) and a lattice point (right). Most of its mass is concentrated on a disk of area 1 (inside the dashed circle). The hexagonal lattice minimizes the distance to the deep hole (best covering) and maximizes the distance to other lattice points (best packing). Heuristically, this corresponds to optimizing $A$ and $B$, respectively.}
    \label{fig:gaussphasespace}
\end{figure}

Heuristically, having a Gaussian $\pi(z) \varphi$ centered near a \textit{deep hole}, i.e., a point globally maximizing the distance to its neighbors, will make $A_\L$ small. Having many lattice points close to each other results in redundant information and $B_\L$ will be large:
\begin{equation}
	\textbf{
		\boxed{
            \hspace{0.5cm}
			\text{maximize } A_\L \longleftrightarrow \text{ sphere covering}
			\hspace{3cm}
			\text{minimize } B_\L \longleftrightarrow \text{ sphere packing}.
            \hspace{0.5cm}
		}
	}
\end{equation}

Figure~\ref{fig:gaussphasespace} depicts the Gauss function in phase space via its spectrogram $|V_\varphi \varphi|^2$. Note: as $g$ is a \mbox{$d$-dimensional} Gaussian we have that $|V_g g|^2$ is a $2d$-dimensional Gaussian.  Most of the mass of $|V_\varphi \varphi|^2$ is concentrated inside a sphere of radius 1 (inside the dashed circle in Figure~\ref{fig:gaussphasespace}).
With some poetic license, we may refer to such a truncated time-frequency representation of the Gaussian as a {\em quantum sphere}, which lends further justification to the names {\em quantum packing} and {\em quantum covering}.



\section{Phase-space formalism}\label{s:phase}
The presented problems can also be formulated in phase space. For this, we introduce the (cross-)Wigner distribution which for functions $f, g \in \Lt$ is given by
\begin{equation}\label{eq:Wigner}
	W(f, g)(x, \omega) = \int_{\Rd} f(x+\tfrac{t}{2}) \overline{g(x-\tfrac{t}{2})} e^{-2 \pi i \omega \cdot t} \, dt, \quad t,x,\omega \in \Rd.
\end{equation}
In the case $f = g$ we simply write $Wf(x,\omega)$. This is often used as a quasi-probability distribution (as it fails to be positive in general) in quantum mechanics.
Applying $\F_\sigma$ to a (cross-)Wigner distribution leads to the ambiguity function, which is a symmetric version of the STFT. We have
\begin{equation}
	\F_\sigma W(f,g)(z) = A(f,g)(z).
\end{equation}
The function $A(f,g)$ is the (cross-)ambiguity function of $f$ and $g$, defined by
\begin{equation}
	A(f, g)(x, \omega) = \int_{\Rd} f(t-\tfrac{x}{2}) \overline{g(t-\tfrac{x}{2})} e^{-2 \pi i \omega \cdot t} \, dt = \langle f, \rho(x,\omega) g \rangle,
\end{equation}
where $\rho = T_{x/2} M_\omega T_{x/2} = M_{\omega/2} T_x M_{\omega/2}$ is a symmetric time-frequency shift, or, equivalently, $\rho$ is obtained from the Schrödinger representation of the Heisenberg group. The Heisenberg group $(\mathbf{H},\circ)$ is a non-commutative group, topologically isomorphic to $\Rd \times \Rd \times \R \cong \R^{2d+1}$, with group law
\begin{equation}
	\mathbf{h} \circ \mathbf{h}' = (x,\omega,\tau) \circ (x',\omega',\tau') = (x+x',\omega+\omega', \tau+\tau' + \tfrac{1}{2}(x \cdot \omega' - x' \cdot \omega)), \quad \mathbf{h}, \mathbf{h}' \in \mathbf{H}.
\end{equation}

We note that, by setting $z=(x,\omega)$ and $z'=(x',\omega')$ we may actually write
\begin{equation}
	\mathbf{h} \circ \mathbf{h}' = (z,\tau) \circ (z',\tau') = (z+z',\tau+\tau' + \tfrac{1}{2} \sigma(z,z')),
\end{equation}
where $\sigma(. \, , .)$ is the standard (skew-symmetric) symplectic form defined by \eqref{eq:symplectic_form} . The inverse of $\mathbf{h} \in \mathbf{H}$ is given by $\mathbf{h}^{-1} = (-z,-\tau)$. So, $(\mathbf{H}, \circ)$ is indeed a group, but non-commutative. Hence, algebraically it differs from $(\R^{2d+1},+)$. Its importance in TFA comes from its unitary representation(s) on $\Lt$. We recall that (in general) time-frequency shifts are not closed under composition due to \eqref{eq:comm_relations};
\begin{equation}
	\pi(z) \pi(z') = e^{-2 \pi i \omega' \cdot x} \pi(z+z').
\end{equation}

It is now advantageous to use the symmetric time-frequency shifts $\rho(z)$ and the (cross-)\textit{ambiguity function} $A(f,g)$ of two functions $f,g \in \Lt$ instead of $\pi(z)$ and $V_gf$:
\begin{equation}
	\rho(z) = M_{\omega/2}T_x M_{\omega/} = T_{x/2} M_\omega T_{x/2} = e^{-\pi i \omega \cdot x} \pi(z),
	\quad \text{ and } \quad
	A(f,g)(z) = \langle f, \rho(z) g \rangle = e^{\pi i \omega \cdot x} V_g f(z).
\end{equation}
Note that the unitary operators $\rho(z)$ are also not closed under composition. The appearance of a phase factor is unavoidable, but, by adding this unitary phase to the operator, we are able to obtain a group:
\begin{equation}
	e^{2 \pi i \tau} \rho(z) e^{2 \pi i \tau'} \rho(z') = e^{2 \pi i (\tau+\tau' + \frac{1}{2} \sigma(z,z'))} \rho(z+z'), \quad \tau \in \R.
\end{equation}
So, we see that by using the composition of the Heisenberg group, the unitary operators $\rho(\mathbf{h}) = \rho(z,\tau) = e^{2 \pi i \tau} \rho(z)$ also form a non-commutative group. It is customary to use the same symbol for the symmetric time-frequency shifts and the unitary representation of the Heisenberg group, as they only differ by the unitary phase factor $e^{2 \pi i t}$. The time-frequency shifts $\pi(z)$ acting on $\Lt$ yield a unitary representation of the \textit{polarized} Heisenberg group (see \cite[Chap.\ 9.1]{Gro01}).

After this short excursion into the Schrödinger representation theory of the Heisenberg group, we return to our phase space formalism of quantum packing, covering and paving. Recall Moyal's identity for the Wigner distribution, which for functions $f_1, \, f_2, \, g_1, \, g_2 \in \Lt$ is given by
\begin{equation}\label{eq:Moyal}
	\left\langle W(f_1,g_1), W(f_2, g_2) \right\rangle_{\Lt[2d]} = \langle f_1, f_2 \rangle_{\Lt} \, \overline{\langle g_1, g_2 \rangle}_{\Lt}.
\end{equation}
Note that the inner product on the left-hand side in \eqref{eq:Moyal} is in $\Lt[2d]$, whereas on the right-hand side they are taken in $\Lt$. Using Moyal's formula, an equivalent formulation of \eqref{eq:frame} in terms of Wigner distributions can be formulated. Consider the following double-sided inequality, for positive constants $A_\L$ and $B_\L$, satisfying
\begin{equation}\label{eq:frame_Wigner}
	A_\L \norm{W(f,\varphi)}_{\Lt[2d]}^2 \leq \sum_{\l \in \L} \left| \langle W(f, \varphi), \, W(\rho(\l) \varphi,\varphi) \rangle_{\Lt[2d]} \right|^2 \leq B_\L \norm{W(f, \varphi)}_{\Lt[2d]}^2,
\end{equation}
for all $f \in \Lt$. Now, pick $f_1 = f$, $f_2 = \rho(\l) \varphi$ and $g_1=g_2 = \varphi$ in \eqref{eq:Moyal} to obtain
\begin{equation}
	\sum_{\l \in \L} | \langle W(f, \varphi), W(\rho(\l) \varphi, \varphi) \rangle |^2 = \sum_{\l \in \L} | \langle f, \rho(\l) \varphi \rangle \, \underbrace{\langle \varphi, \varphi \rangle}_{=1} |^2 = \sum_{\l \in \L} |\langle f, \rho(\l) \varphi \rangle |^2.
\end{equation}
It remains to observe, by using Moyal's identity \eqref{eq:Moyal} again, that
\begin{equation}
	\norm{W(f, \varphi)}_{\Lt[2d]} = \norm{f}_{\Lt} \norm{\varphi}_{\Lt} = \norm{f}_{\Lt}.
\end{equation}
Hence, for all $f \in \Lt$ it holds that
\begin{align}
	& \quad &
    A_\L \norm{W(f,\varphi)}_{\Lt[2d]}^2
    & \leq \sum_{\l \in \L} \left| \langle W(f, \varphi), \, W(\rho(\l) \varphi,\varphi) \rangle \right|^2  \leq B_\L \norm{W(f, \varphi)}_{\Lt[2d]}^2\\
	\Longleftrightarrow & \quad &
    A_\L \norm{f}_{\Lt}^2
    & \leq \sum_{\l \in \L} \left| \langle f, \rho(\l) \varphi) \rangle \right|^2 \leq B_\L \norm{f}_{\Lt}^2,
\end{align}
with same optimal constants $A_\L$ and $B_\L$. Hence, problems \ref{pro:qpacking}, \ref{pro:qcovering}, \ref{pro:qpaving} have an equivalent formulation using phase space formalism. Since $\F_\sigma$ is a unitary operator, we have Parseval's identity and Plancherel's formula at hand and also have the following equivalent two-sided inequalities:
\begin{align}
    & \quad &
    A_\L \norm{f}_{\Lt}^2
    & \leq \sum_{\l \in \L} \left| \langle f, \rho(\l) \varphi) \rangle \right|^2 \leq B_\L \norm{f}_{\Lt}^2\\
    \Longleftrightarrow & \quad &
    A_\L \norm{A(f,\varphi)}_{\Lt[2d]}^2
    & \leq \sum_{\l \in \L} \left| \langle A(f, \varphi), \, A(\rho(\l) \varphi,\varphi) \rangle \right|^2  \leq B_\L \norm{A(f, \varphi)}_{\Lt[2d]}^2
\end{align}
The same holds true if $A(f,\varphi)$ is replaced by the STFT $V_\varphi f$ or by the Rihacek distribution, which is the symplectic Fourier transform of the STFT.

\section{Existing (auxiliary) results}


In this section we summarize some results on energy minimization and polarization which can be used to solve quantum packing, quantum covering and quantum paving in (quite) specific situations. Quite generally, energy minimization for lattices of a fixed density considers the task
\begin{equation}
    \text{solve:} \quad \min_{\L} \sum_{\l \in \L} p(|\l|),
\end{equation}
where $p(r) = f(r^2)$ and $f$ is completely monotone. The dual problem of polarization for lattices is
\begin{equation}
    \text{solve:} \quad \max_{\L} \min_{y} \sum_{\l \in \L} p(|\l+y|),
\end{equation}
where $p$ is as above. In both cases Gaussians are of special interest due to the Bernstein-Widder theorem.
\begin{thm}[Bernstein-Widder]\label{thm:Bernstein-Widder}
    Let $f$ be a completely monotone function, then there exists a non-negative finite Borel measure on $\R_+$ with distribution function $\mu_f$ such that
    \begin{equation}
        f(r) = \int_{\R_+} e^{-\alpha r} d \mu_f(\alpha).
    \end{equation}
\end{thm}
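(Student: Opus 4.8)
The plan is to reduce the continuous representation to the classical \emph{Hausdorff moment problem} on an arithmetic grid, and then to recover the full Laplace transform by a weak-compactness argument. Since the asserted measure is finite, I work under the (necessary) normalisation $f(0^+) < \infty$, which will turn out to equal the total mass $\mu_f(\R_+)$.

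First I would sample $f$ on a grid. Fix a step $h > 0$ and set $c_n = f(nh)$ for $n = 0,1,2,\dots$. The mean value theorem for the $k$-th forward difference $\Delta^k$ of step $h$ gives $\Delta^k f(r) = h^k f^{(k)}(\xi)$ for some $\xi \in (r, r+kh)$, so complete monotonicity of $f$ yields $(-1)^k \Delta^k c_n \geq 0$ for all $k,n \geq 0$. Thus $(c_n)$ is a completely monotone sequence, and the solution of the Hausdorff moment problem provides a non-negative finite Borel measure $\nu_h$ on $[0,1]$, of total mass $c_0 = f(0^+)$, with
\begin{equation}
    f(nh) = \int_0^1 s^n \, d\nu_h(s), \qquad n = 0, 1, 2, \dots.
\end{equation}

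Next I would transport this to a Laplace representation on the grid $h\N$. The substitution $s = e^{-\alpha h}$ pushes $\nu_h$ forward to a non-negative measure $\mu_h$ on $\R_+$ satisfying
\begin{equation}
    f(nh) = \int_{\R_+} e^{-\alpha (nh)} \, d\mu_h(\alpha), \qquad n = 0, 1, 2, \dots.
\end{equation}
Since the family $\{\mu_h\}$ has uniformly bounded mass at most $f(0^+)$, Helly's selection theorem furnishes a subsequence, which I relabel $h = 1/m \to 0$, along which $\mu_h$ converges vaguely to a non-negative measure $\mu_f$ on $\R_+$. For a fixed $r > 0$ I would pick grid points $r_m \in h\N$ with $r_m \to r$; writing
\begin{equation}
    f(r_m) = \int_{\R_+} e^{-\alpha r_m} \, d\mu_{1/m}(\alpha)
\end{equation}
and letting $m \to \infty$, the left-hand side tends to $f(r)$ by continuity while the right-hand side should tend to $\int_{\R_+} e^{-\alpha r}\, d\mu_f(\alpha)$. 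Finiteness of the limit measure then follows by monotone convergence as $r \downarrow 0$, giving $\mu_f(\R_+) = f(0^+)$, and uniqueness is immediate from the injectivity of the Laplace transform.

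I expect the main obstacle to be precisely this passage to the continuum limit. Two points require care. First, the change of variables $s = e^{-\alpha h}$ depends on $h$, so one cannot take the weak limit in the compact $s$-picture and substitute afterwards; the limit must be taken in the $\alpha$-variable, where the domain $\R_+$ is non-compact and mass could a priori escape to $\alpha = \infty$. Second, to pass the limit inside the integral I would split
\begin{equation}
    \int e^{-\alpha r_m}\, d\mu_{1/m} - \int e^{-\alpha r}\, d\mu_f
    = \int \bigl( e^{-\alpha r_m} - e^{-\alpha r} \bigr)\, d\mu_{1/m}
    + \int e^{-\alpha r}\,\bigl( d\mu_{1/m} - d\mu_f \bigr),
\end{equation}
bounding the first term by $\sup_{\alpha \geq 0}\lvert e^{-\alpha r_m} - e^{-\alpha r}\rvert\, f(0^+) \to 0$ and the second by vague convergence, using crucially that $e^{-\alpha r}$ lies in $C_0(\R_+)$ for every fixed $r > 0$. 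The fact that $e^{-\alpha r}$ vanishes at infinity is exactly what lets vague convergence against bounded measures suffice and, a posteriori, rules out any loss of mass to infinity.
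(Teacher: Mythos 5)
The paper does not prove this statement: it is quoted as the classical Bernstein--Widder (Hausdorff--Bernstein--Widder) theorem and used as a black box to justify restricting attention to Gaussian kernels, so there is no in-paper argument to compare against. Judged on its own, your outline is a correct rendition of the standard discretization proof (essentially Feller's, Vol.~II, Ch.~XIII, or Widder's): the mean-value identity $\Delta_h^k f(r) = h^k f^{(k)}(\xi)$ correctly turns complete monotonicity of $f$ into complete monotonicity of the sampled sequence, the Hausdorff moment theorem supplies $\nu_h$, and your two-term splitting for the continuum limit is exactly right --- in particular you correctly identify that $e^{-\alpha r} \in C_0(\R_+)$ for $r>0$ is what makes vague convergence with uniformly bounded mass sufficient and prevents escape of mass to $\alpha = \infty$ from corrupting the identity. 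Two small points to tidy up in a full write-up: (i) the substitution $s = e^{-\alpha h}$ sends $s=0$ to $\alpha = +\infty$, so a possible atom of $\nu_h$ at $s=0$ has no image in $\R_+$; this only affects the $n=0$ moment, and since you recover the total mass a posteriori via monotone convergence as $r \downarrow 0$ (giving $\mu_f(\R_+) = f(0^+)$, which also needs $f$ extended to $0$ by its limit), the representation for $r>0$ is unaffected --- but this should be said explicitly; (ii) the Hausdorff moment theorem you invoke is itself a substantive input, so the argument is a reduction rather than a from-scratch proof, which is standard and acceptable. The main alternative route in the literature is via Choquet/Krein--Milman on the cone of completely monotone functions, whose extreme rays are the exponentials $e^{-\alpha r}$; your approach is more elementary and constructive.
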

Hence, completely monotone functions of squared distance, i.e., $p(r) = f(r^2)$, $f$ completely monotone, are in the span of the Gaussians $g(\alpha r)$, $\alpha >0$ and the constant function (cf.~\cite{Coh-Via22}). For further reading we refer to the article by Cohn and Kumar \cite{CohKum07} or to \cite{Coh-Via22, FauShaZlo23, FauSte24}.

The first result we present is due to Montgomery \cite{Mon88} and concerns energy minimization in dimension~2.
\begin{thm}[Montgomery, 1988]
	Among all lattices of any fixed density and for any fixed $\alpha > 0$, the hexagonal lattice $\L_2 \subset \mathbb{R}^2$ is the unique solution the following problem.
\begin{equation}
     \text{Solve:} \quad \min_{\L} \sum_{\l \in \L}e^{- \pi \alpha |\l|^2}.
\end{equation}
\end{thm}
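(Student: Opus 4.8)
The plan is to translate the lattice sum into the classical language of the modular upper half-plane and then pin the minimiser to the corner of the fundamental domain by symmetry and monotonicity. First I would use that rescaling $\L\mapsto c\L$ merely rescales $\alpha$, so it suffices to minimise over lattices of density $1$; the minimiser for any other density then follows by scaling. Identifying $\R^2\cong\C$, every such lattice can be written, up to rotation, as $\L_\tau = y^{-1/2}(\Z + \tau\,\Z)$ with $\tau = x+iy$, $y>0$, and then
\begin{equation}
\sum_{\l\in\L_\tau} e^{-\pi\alpha|\l|^2}
= \Theta(\tau)
:= \sum_{m,n\in\Z}\exp\!\Big(-\tfrac{\pi\alpha}{y}\big[(m+nx)^2+n^2y^2\big]\Big),
\end{equation}
so the problem becomes to minimise $\Theta(\tau)$ over $\tau$.

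Next I would exploit symmetry. Since the left-hand side depends only on the lattice, not on its chosen basis or orientation, $\Theta$ is invariant under the M\"obius action of $SL(2,\Z)$ and under $\tau\mapsto-\bar\tau$; hence it is enough to minimise over the half of the standard fundamental domain $\mathcal F=\{\,|x|\le\tfrac12,\ |\tau|\ge1\,\}$ with $0\le x\le\tfrac12$. Applying Poisson summation \eqref{eq:PSF} in the variable $m$ both shows that $\Theta(\tau)\to\infty$ as $y\to\infty$ (so the minimum is attained in the compact part of $\mathcal F$) and yields the dual representation
\begin{equation}
\Theta(\tau)=\sqrt{\tfrac{y}{\alpha}}\sum_{n\in\Z} e^{-\pi\alpha y\,n^2}\,
\theta_3\!\big(nx \,\big|\, iy/\alpha\big),
\qquad
\theta_3(v\mid is)=\sum_{k\in\Z} e^{-\pi s k^2}e^{2\pi i k v},
\end{equation}
which, for fixed $y$, exhibits $\Theta$ as a cosine series in $x$. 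By the Jacobi triple product every factor of $\theta_3(\,\cdot\mid is)$ is strictly decreasing on $[0,\tfrac12]$, so the frequencies $n=\pm1$ strictly push the minimiser toward $x=\tfrac12$.

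The heart of the argument is then to show that on every horizontal segment of $\mathcal F$ the map $x\mapsto\Theta(x+iy)$ is strictly decreasing, forcing the minimum onto the right edge $x=\tfrac12$ and the bottom arc $|\tau|=1$, whose common point is precisely the hexagonal lattice $\L_2=\L_{\tau_h}$ with $\tau_h=\tfrac12+i\tfrac{\sqrt3}{2}$. To finish I would check that along $x=\tfrac12$ the function increases with $y\ge\tfrac{\sqrt3}{2}$ and along the arc decreases toward $\tau_h$, pinning the global minimum at $\tau_h$; strictness of all these inequalities then gives uniqueness.

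The genuine difficulty is the monotonicity in $x$. The favourable direction is transparent only for $n=\pm1$: for $|n|\ge2$ the argument $nx$ wraps past $\tfrac12$ as $x$ runs over $[0,\tfrac12]$, so $\theta_3(nx\mid\cdot)$ is no longer monotone and $\partial_x\Theta\le0$ cannot be read off term by term. The substance of the proof, and where I expect the real work, is to majorise the higher-frequency ($|n|\ge2$) contributions and show that the $n=\pm1$ terms dominate \emph{uniformly} in $y$ and $\alpha$, so that the full series is decreasing despite the cancellation.
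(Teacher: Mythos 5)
The paper does not actually prove this theorem: it is imported verbatim from Montgomery \cite{Mon88} as an auxiliary result, so there is no in-paper argument to compare against. Your outline is, in substance, the strategy of Montgomery's original proof: reduce by scaling to unimodular lattices, parametrize them by $\tau=x+iy$ in the modular fundamental domain, apply one-dimensional Poisson summation to obtain the expansion $\Theta=\sqrt{y/\alpha}\sum_n e^{-\pi\alpha yn^2}\,\theta_3(nx\mid iy/\alpha)$, prove $\partial_x\Theta<0$ for $0<x<\tfrac12$, and then minimize the resulting one-parameter family on the edge $x=\tfrac12$.

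As a proof, however, the proposal has a genuine gap, and you have correctly located it yourself: the monotonicity in $x$ is transparent only for the harmonics $n=\pm1$, and nothing in your text establishes that the $|n|\ge 2$ terms cannot overturn it. This is not a routine tail estimate; it is the entire analytic content of Montgomery's paper. One needs to split into ranges of $y$, control $\partial_v\theta_3(v\mid is)$ uniformly (including for small $s$, where the product expansion degenerates), and use the inversion $\tau\mapsto-1/\tau$ to reduce small $y$ to large $y$; the claimed domination is emphatically not uniform "for free." Until that step is carried out the argument does not close, and the theorem is not proved. A second, smaller omission: after $\partial_x\Theta<0$ you must still show that on the edge $x=\tfrac12$ the function is strictly increasing in $y$ for $y\ge\sqrt3/2$ (your separate clause about the arc is then redundant, since the arc meets the half-domain you chose only at $\tau_h$ after reduction); this edge analysis is another nontrivial derivative estimate that you have only asserted. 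So: right roadmap, matching the standard (and, as far as I know, only) proof, but the two decisive inequalities are announced rather than established.
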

 The next result is from \cite{BetFauSte21} and can be regarded dual to the theorem of Montgomery. It solves, e.g., the polarization problem in dimension 2 as conjectured in \cite{HarPetSaf22}.

\begin{thm}[B\'etermin, Faulhuber, Steinerberger, 2021]
	Among all lattices of any fixed density and for any fixed $\alpha > 0$, the hexagonal lattice $\L_2 \subset \mathbb{R}^2$ is the unique solution the following problem.
\begin{equation}
     \text{Solve:} \quad \max_{\L} \min_{y} \sum_{\l \in \L}e^{- \pi \alpha |\l+y|^2}.
\end{equation}
\end{thm}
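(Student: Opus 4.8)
The plan is to reduce this polarization problem to the theorem of Montgomery stated just above, by exploiting the special arithmetic of the hexagonal lattice and its deep holes. Throughout I normalize the density to $1$ (the general case follows by the scaling $\L=\alpha^{1/2}\L_0$) and write
\[
\Theta_\L(y,\alpha)=\sum_{\l\in\L}e^{-\pi\alpha|\l+y|^2},
\]
so that the quantity to be maximized is $F(\L):=\min_y\Theta_\L(y,\alpha)$. The starting point is the Poisson summation formula \eqref{eq:PSF}, which in the plane $\R^2$ and for density $1$ gives the dual representation
\[
\Theta_\L(y,\alpha)=\frac1\alpha\sum_{\mu\in\L^\perp}e^{-\pi|\mu|^2/\alpha}\,\cos(2\pi\,\mu\cdot y),
\]
the cosine arising because $\mu$ and $-\mu$ both occur in $\L^\perp$. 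Minimizing over $y$ thus means driving the oscillatory part as negative as possible.

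First I would pin down the minimizer for $\L_2$ itself. By the order-$12$ point symmetry of the hexagonal lattice, the centres of the two equilateral triangles in a fundamental cell are critical points of $y\mapsto\Theta_{\L_2}(y,\alpha)$, and I would show that these \emph{deep holes} $h_1,h_2$ are in fact the global minimizers (the lattice points being the global maxima and the edge midpoints saddle points). The decisive structural fact is the three-colouring of the hexagonal lattice: the union $\L'=\L_2\cup(\L_2+h_1)\cup(\L_2+h_2)$ is again a hexagonal lattice, now of density $3$. Summing the three cosets yields
\[
\Theta_{\L'}(0,\alpha)=\Theta_{\L_2}(0,\alpha)+\Theta_{\L_2}(h_1,\alpha)+\Theta_{\L_2}(h_2,\alpha),
\]
and since $h_2\equiv-h_1\pmod{\L_2}$ and $\L_2=-\L_2$ force $\Theta_{\L_2}(h_1,\alpha)=\Theta_{\L_2}(h_2,\alpha)$, this computes the optimal value explicitly as
\[
F(\L_2)=\tfrac12\big(\Theta_{\L'}(0,\alpha)-\Theta_{\L_2}(0,\alpha)\big).
\]

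For a competing density-$1$ lattice $\L$ I would produce an upper bound of the same shape. Since the minimum is at most the average over any finite set of points, taking the two nontrivial cosets of an index-$3$ superlattice $\L^+\supset\L$ as test points gives
\[
F(\L)\le\tfrac12\big(\Theta_{\L^+}(0,\alpha)-\Theta_\L(0,\alpha)\big),
\]
and here I am free to choose the best of the four index-$3$ superlattices of $\L$. Montgomery's theorem then enters twice: among density-$1$ lattices it forces $\Theta_\L(0,\alpha)\ge\Theta_{\L_2}(0,\alpha)$, and among density-$3$ lattices it forces $\Theta_{\L^+}(0,\alpha)\ge\Theta_{\L'}(0,\alpha)$, with equality in each case only at the respective hexagonal lattice; uniqueness of the maximizer would follow from the uniqueness clause in Montgomery's theorem.

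\textbf{Main obstacle.} The difficulty is precisely the direction of these two inequalities: the two applications of Montgomery push the averaging bound the \emph{opposite} way, so the naive estimate does not close. One must instead control the \emph{constrained} minimum $\min_{\L^+\supset\L}\Theta_{\L^+}(0,\alpha)$ over index-$3$ superlattices of the \emph{fixed} lattice $\L$ (not over all density-$3$ lattices), and show that the deep hole of a general lattice is captured sharply enough by these symmetric test points. I expect the honest route is to pass to the upper-half-plane parametrization $z\mapsto\L_z$, in which $F$ becomes an explicit function of $z$ with the hexagonal lattice at $z=e^{i\pi/3}$, and to combine a local (Hessian) analysis at that point, read off from the dual representation above, with a global argument on the modular fundamental domain that reduces the remaining comparison to the monotonicity of the Jacobi theta constants underlying Montgomery's proof. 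Verifying that the deep hole stays at a symmetry point along the relevant one-parameter degenerations, and that no competitor beats $F(\L_2)$ off the symmetry locus, is the technical heart of the argument.
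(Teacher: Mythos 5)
First, a point of orientation: the paper does not prove this theorem at all --- it is quoted as an existing auxiliary result imported from \cite{BetFauSte21}, so there is no internal proof to compare your argument against. Your proposal therefore has to stand on its own, and it does not: you have correctly diagnosed its fatal flaw yourself. The coset-averaging bound $F(\L)\le\tfrac12\bigl(\Theta_{\L^+}(0,\alpha)-\Theta_\L(0,\alpha)\bigr)$ over an index-$3$ superlattice $\L^+$ would need an \emph{upper} bound on $\Theta_{\L^+}(0,\alpha)$ by the value of the density-$3$ hexagonal lattice, whereas Montgomery's theorem delivers a \emph{lower} bound (the hexagonal lattice \emph{minimizes} the theta function at the origin among lattices of its density). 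The two invocations of Montgomery pull in opposite directions and the estimate cannot close. This is not a technicality but a structural obstruction: for a generic $\L$ none of its index-$3$ superlattices is similar to $\L$ itself, unlike the hexagonal case, where adjoining the two deep holes reproduces a rotated copy of $\L_2$ at density $3$ --- the very coincidence your computation of $F(\L_2)$ exploits.

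The fallback you sketch (upper-half-plane parametrization, Hessian at $z=e^{i\pi/3}$, a ``global argument on the modular fundamental domain'') is not an argument but a restatement of the difficulty: a local analysis at the hexagonal point cannot exclude distant competitors, and reducing the global comparison to ``the monotonicity of the Jacobi theta constants underlying Montgomery's proof'' is precisely the step you would have to invent. Note also that your first step --- that the deep holes are the global minimizers of $y\mapsto\Theta_{\L_2}(y,\alpha)$ for every $\alpha$ --- is itself a nontrivial theorem (Baernstein, \cite{Bae97}), and the paper explicitly remarks that the result of \cite{BetFauSte21} is striking precisely because its proof never needs to locate the minimizer of $\Theta_\L(\cdot,\alpha)$ for a general $\L$; the actual argument there proceeds by a different route (a variational principle tied to the heat flow) rather than by coset averaging against Montgomery. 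As it stands, your proposal computes the exact value of $F(\L_2)$ (modulo Baernstein) but establishes no inequality for any competing lattice.
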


The last result we present certifies the so-called universal optimality of the $\mathsf{E}_8$ and Leech lattice in dimension 8 and 24, respectively.
\begin{thm}[Cohn, Kumar, Miller, Radchenko, Viazovska, 2022]
    Among all lattices of any fixed density and for any fixed $\alpha > 0$, the $\mathsf{E}_8$ and Leech lattice give the unique solution to the following problem in their respective dimension.
\begin{equation}
     \text{Solve:} \quad \min_{\L} \sum_{\l \in \L}e^{- \pi \alpha |\l|^2}.
\end{equation}
\end{thm}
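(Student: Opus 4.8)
The plan is to follow the linear programming (LP) approach to energy minimization of Cohn and Kumar \cite{CohKum07}, reducing the theorem to the construction of suitable auxiliary ``magic'' functions in dimensions $8$ and $24$. Fix $\alpha>0$ and write $p(r)=e^{-\pi\alpha r^2}$. For a lattice $\L\subset\R^d$ of covolume $V=\vol(\R^d/\L)$ (so $\den(\L)=1/V$), the quantity to be minimized is the Gaussian energy $\sum_{\l\in\L}p(|\l|)$. Suppose we can produce a radial Schwartz function $f\colon\R^d\to\R$ with
\begin{equation}
  f(x)\leq p(|x|)\quad\text{for all }x\neq 0,
  \qquad
  \F f(t)\geq 0\quad\text{for all }t.
\end{equation}
Then the Poisson summation formula \eqref{eq:PSF}, evaluated at $x=0$, namely $\sum_{\l\in\L}f(\l)=V^{-1}\sum_{\l^\perp\in\L^\perp}\F f(\l^\perp)$, yields the lattice-independent lower bound
\begin{equation}
  \sum_{\l\in\L,\,\l\neq 0}p(|\l|)\ \geq\ \sum_{\l\in\L,\,\l\neq 0}f(\l)\ \geq\ V^{-1}\,\F f(0)-f(0).
\end{equation}
The theorem then follows if, for each $\alpha$ and each $d\in\{8,24\}$, one finds such an $f$ for which this bound is \emph{attained} by $\mathsf{E}_8$, respectively by the Leech lattice.

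First I would pin down the tightness (complementary slackness) conditions. Since $\mathsf{E}_8$ and the Leech lattice are unimodular, hence equal to their own dual $\L^\perp=\L$, equality throughout the second display forces $f(x)=p(|x|)$ at every nonzero lattice vector and $\F f(t)=0$ at every nonzero (dual) lattice vector. Because $f\le p$ and $\F f\ge 0$, these contacts must be \emph{double} roots: the magic function has to match both the value and the first derivative of the Gaussian at the radii $r_n=\sqrt{2n}$ realized by the lattice, while $\F f$ together with its derivative must vanish there. This is precisely a Hermite-type \emph{Fourier interpolation} problem at the arithmetic sequence of nodes $r_n=\sqrt{2n}$.

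The core of the argument is to solve this interpolation problem, which is where essentially the entire weight of the proof sits. The construction produces, for each node, radial Schwartz functions forming a dual interpolation basis; they are manufactured as contour integrals of weakly holomorphic (quasi)modular forms---for $SL(2,\Z)$ in dimension $8$ (weight tied to $d/2=4$) and for a suitable congruence subgroup in dimension $24$---integrated against the Gaussians $e^{\pi i \tau |x|^2}$ over a fundamental domain of the modular curve, so that the two transform behaviours are controlled simultaneously. The magic function for a given $\alpha$ is then assembled from this basis to realize the required contacts automatically. The genuinely hard step, and the main obstacle, is verifying the two global sign conditions $f\le p$ and $\F f\ge 0$ for \emph{all} $\alpha>0$ at once: this reduces to showing that the Fourier coefficients of the underlying modular forms have prescribed signs and that the basis functions are single-signed away from their nodes, a delicate analysis of the $q$-expansions and of the growth of the forms near the cusps.

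Finally, uniqueness follows from a rigidity analysis of the equality case: any minimizing periodic configuration must saturate every inequality above, which forces its pair-distance distribution to coincide with that of $\mathsf{E}_8$ (respectively Leech), and a separate argument upgrades this to an isometry with the lattice. Should one wish to extend beyond Gaussians to every completely monotone potential $p(r)=f(r^2)$, this is immediate from the Bernstein-Widder Theorem~\ref{thm:Bernstein-Widder}, which exhibits such $p$ as a nonnegative superposition of the Gaussians treated above; integrating the lattice-independent bound against the representing measure shows that a single universally optimal lattice minimizes all of them simultaneously.
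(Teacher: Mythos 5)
The paper does not prove this statement at all: it is quoted as an external result, with the entire content delegated to the citation \cite{Coh-Via22}. So there is no internal proof to compare against, and the only question is whether your sketch is a faithful and self-supporting argument. It is faithful: the architecture you describe --- the Cohn--Kumar linear programming bound via Poisson summation \eqref{eq:PSF}, the complementary-slackness analysis forcing double roots of $p-f$ and of $\F f$ at the radii $\sqrt{2n}$ of the self-dual lattices $\mathsf{E}_8$ and Leech, the reduction to a Hermite-type Fourier interpolation problem solved by contour integrals of modular forms, the sign verification, and the rigidity argument for uniqueness --- is exactly the strategy of \cite{Coh-Via22}, and your closing remark that Bernstein--Widder (Theorem~\ref{thm:Bernstein-Widder}) upgrades the Gaussian case to all completely monotone potentials matches how the paper itself frames the relevance of Gaussians.

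That said, as a proof your text is a reduction, not a derivation: every step that is actually hard is named rather than carried out. The existence of radial Schwartz functions solving the interpolation problem with the required global inequalities $f\le p$ and $\F f\ge 0$ is the content of a roughly hundred-page Annals paper, and nothing in your sketch could be checked without reproducing it. Two smaller inaccuracies: the generating functions for the interpolation basis are modular for $\Gamma_\theta$ and related congruence subgroups (not $SL(2,\Z)$ itself in dimension $8$), and the relevant contour integrals run along paths in the upper half-plane between cusps, not over a fundamental domain. Neither affects the logic, but both signal that the construction is being gestured at rather than performed. Within the context of this paper, citing \cite{Coh-Via22} as the authors do is the honest move; your outline is a useful gloss on what that citation hides, no more.
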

We remark that the result hols for periodic point configurations and not only lattices, in contrast to the presented results in dimension 2. These, however, are also expected to hold for general point configurations.

Due to the appearance of the Gaussian function in all of the above results and the fact that
\begin{equation}
    |\pi(z) \varphi, \pi(\l) \varphi|^2 = e^{- \pi |\l-z|^2},
\end{equation}
we can make take advantages of these results to solve quantum packing, covering and paving in phase space dimension 2 under the assumption that the lattice density is even.

\section{Known solutions and some conjectures}\label{s:known}

The only case where we know solutions to the quantum packing, covering, and paving problems is for the Hilbert space $\Lt[]$ (i.e., $d=1$ and phase space dimension is 2) and only if we consider solely lattices where the density is an even integer. This is a consequence of combining the polarization result from \cite{BetFauSte21}, the energy minimization result from \cite{Mon88} (see also \cite{Fau18-JFAA}) and a result of Janssen on Gabor frame bounds~\cite{Jan96}.
\begin{thm}[B\'etermin, Faulhuber, Steinerberger, 2021]
	Let $d=1$, so phase space is $\R^2$. Consider the quantum problems \ref{pro:qpacking}, \ref{pro:qcovering}, \ref{pro:qpaving} above for lattices of even integer density. Then, for fixed density and within this class of lattices, the hexagonal lattice $\L_2$ solves all of the above quantum problems simultaneously.
\end{thm}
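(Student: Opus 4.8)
The plan is to reduce each of the three quantum quantities to a classical lattice sum of Gaussians and then to feed these into the two extremality theorems of Montgomery \cite{Mon88} and B\'etermin--Faulhuber--Steinerberger \cite{BetFauSte21}. Throughout, abbreviate the coherent-state sum appearing in the relaxation \eqref{eq:relaxed} by
\[
 G_\Lambda(z) = \sum_{\lambda\in\Lambda} e^{-\pi|\lambda-z|^2}
 = \sum_{\lambda\in\Lambda}\bigl|\langle \rho(z)\varphi,\rho(\lambda)\varphi\rangle\bigr|^2 ,
\]
so that, since $\rho(z)\varphi$ has unit norm, \eqref{eq:relaxed} is nothing but the Rayleigh quotient of the frame operator $S_\Lambda$ evaluated along the coherent states. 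This gives a priori only the one-sided estimates $A_\Lambda\le \min_z G_\Lambda(z)$ and $B_\Lambda\ge \max_z G_\Lambda(z)$.

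\emph{First, sharpness.} The crux is to upgrade these to equalities,
\[
 A_\Lambda = \min_{z\in\R^2} G_\Lambda(z), \qquad B_\Lambda = \max_{z\in\R^2} G_\Lambda(z),
\]
i.e.\ to show that the extreme eigenvalues of $S_\Lambda$ are attained on the coherent-state family $\{\rho(z)\varphi\}$. This is exactly the place where the hypothesis of even integer density is used: for a Gaussian window and such an oversampling the exact Gaussian Gabor frame bounds coincide with the extremal values of this restricted quadratic form, which is Janssen's frame-bound computation \cite{Jan96} (in the lattice-general form underlying \cite{BetFauSte21}). I expect this to be the main obstacle, since outside the even-integer regime the relaxation \eqref{eq:relaxed} need not be sharp and the problem is genuinely open; I would stress that the two extremality inputs below are density-agnostic, so the even-integer assumption enters \emph{only} here.

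\emph{Second, locating the extrema.} The function $G_\Lambda$ is $\Lambda$-periodic, and Poisson summation \eqref{eq:PSF} applied to the Gaussian (whose Fourier transform is again a strictly positive Gaussian) yields
\[
 G_\Lambda(z) = \frac{1}{\vol(\R^2/\Lambda)}\sum_{\mu\in\Lambda^\perp} e^{-\pi|\mu|^2}\, e^{2\pi i\, z\cdot\mu}.
\]
As all Fourier coefficients are positive, $G_\Lambda$ attains its maximum precisely when every exponential equals one, i.e.\ on $\Lambda$ itself, while its minimum sits at a deep hole; hence
\[
 B_\Lambda = G_\Lambda(0) = \sum_{\lambda\in\Lambda} e^{-\pi|\lambda|^2}, \qquad A_\Lambda = \min_{y\in\R^2}\sum_{\lambda\in\Lambda} e^{-\pi|\lambda+y|^2}.
\]
Writing $\Lambda = \alpha^{1/2}\Lambda_0$ with $\delta(\Lambda_0)=1$ converts both expressions into the Gaussian lattice sums of the previous section, with the single value $\alpha = 1/\delta(\Lambda)$ fixed by the prescribed density.

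\emph{Finally, extremality and combination.} Montgomery's theorem \cite{Mon88} gives that, at fixed density and for every $\alpha>0$, the hexagonal lattice $\Lambda_2$ uniquely minimizes $\sum_{\lambda} e^{-\pi\alpha|\lambda|^2}$; hence $\Lambda_2$ minimizes $B_\Lambda$ and solves quantum packing. Dually, \cite{BetFauSte21} gives that $\Lambda_2$ uniquely maximizes $\min_y\sum_{\lambda} e^{-\pi\alpha|\lambda+y|^2}$; hence $\Lambda_2$ maximizes $A_\Lambda$ and solves quantum covering (note that, as the boxed heuristic makes explicit, the relevant covering direction is to make $A_\Lambda$ \emph{large}). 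Combining the two, for any competing lattice $\Lambda$ of the same density one has $B_\Lambda\ge B_{\Lambda_2}$ and $A_\Lambda\le A_{\Lambda_2}$, so $B_\Lambda/A_\Lambda\ge B_{\Lambda_2}/A_{\Lambda_2}$, which is quantum paving; moreover, equality in the ratio forces equality in both factors, so uniqueness of $\Lambda_2$ is inherited from \cite{Mon88} and \cite{BetFauSte21}. Thus the single lattice $\Lambda_2$ solves all three problems simultaneously.
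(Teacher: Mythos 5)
Your proposal is correct and follows essentially the same route as the paper: reduce $A_\Lambda$ and $B_\Lambda$ to the extrema of a Gaussian theta-type sum via Janssen's exact frame-bound formula (which is where the even-integer-density hypothesis enters), locate the maximum at lattice points by positivity of the Fourier coefficients, and then invoke Montgomery for $B_\Lambda$ and B\'etermin--Faulhuber--Steinerberger for $A_\Lambda$, combining the two for the paving ratio. The only cosmetic difference is that the paper phrases the theta sum on the adjoint lattice $\Lambda^\circ$ (Janssen representation, with the triangle inequality playing the role of your positivity argument), while you work with the sum over $\Lambda$ itself; the two are equivalent by symplectic Poisson summation.
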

\begin{proof}
    We give the sketch of the proof and start with defining the (theta-like) functions in phase space:
    \begin{equation}
        F_\L(z) = \sum_{\l \in \L} e^{-\pi |\lambda+z|^2}
        \quad \text{ and } \quad
        \widehat{F}_{\L}(z) = \sum_{\l \in \L} e^{-\pi |\l|^2} e^{2 \pi i \sigma(\l, z)}.
    \end{equation}
	For a lattice $\L \subset \R^2$, the results of Janssen \cite{Jan96} on Gaussian Gabor systems (see \cite{Fau18} for the concrete formula we use here) shows that
	\begin{equation}
		A_\L = \min_{z \in \R} \vol(\L)^{-1} \, \widehat{F}_{\Lambda^\circ}(z)
        \quad \text{ and } \quad
		B_\L = \max_{z \in \R} \vol(\L)^{-1} \, \widehat{F}_{\Lambda^\circ}(z).
	\end{equation}
	A simple application of the triangle inequality yields
	\begin{align}
		\left| \vol(\L)^{-1} \sum_{\l^\circ \in \L^\circ} e^{-\pi |\l^\circ|^2} e^{2 \pi i \sigma(\l^\circ, z)} \right| & \leq  \vol(\L)^{-1} \sum_{\l^\circ \in \L^\circ} \left| e^{-\pi |\l^\circ|^2} \right| \, \left| e^{2 \pi i \sigma(\l^\circ, z)} \right|\\
		& = \vol(\L)^{-1} \sum_{\l^\circ \in \L^\circ} e^{-\pi |\l^\circ|^2},
	\end{align}
	which shows that the maximum is always taken at a lattice point $\l \in \L$ (in particular at the origin). Therefore, Montgomery's result \cite{Mon88} implies that $B_\L$ is minimal if and only if $\L$ is the hexagonal lattice $\L_2$. The polarization theorem in \cite{BetFauSte21} implies that $A_\L$ is maximal if and only if the lattice is hexagonal and, hence, the quantum paving problem is uniquely solved by the hexagonal lattice under the assumptions of the theorem.
\end{proof}

Note that finding the minimum of the function $F_\Lambda(z)$ or, equivalently $\vol(\L)^{-1} \widehat{F}_{\Lambda^\circ}(z)$ is a difficult task. There are only few rigorous results and generally the minimizer does not only depend on the geometry but also on the density of the lattice. In some exceptional cases, however, the minimizer is static: for $\Z^{2d}$ the minimizer of $f_{\Z^{2d}}$ is always taken at $(\Z+1/2)^{2d}$ and if $\Delta$ is a full-rank diagonal matrices, then the minimizer of $f_{\Delta \Z^{2d}}$ is found at $\Delta (\Z+1/2)^{2d}$ (see \cite{Jan96}, \cite{FauShaZlo23}). To the best of our knowledge, the only other lattice for which we know the minimizer of $f_\L(z)$ is the hexagonal lattice $\L_2$, where the minimum is taken at deep holes \cite{Bae97}.

It is quite remarkable that the recent result of L.~B\'etermin, M.~Faulhuber, and S.~Steinerberger \cite{BetFauSte21} shows that $\L_2$ maximizes the minimum among all $F_\L(z)$ for fixed density of $\L$, without knowing the minimizer of $F_\L(z)$ in general.

\begin{conjecture}\label{conj:hexagonal}
	Due to the blatant lack of counter examples, we conjecture that the hexagonal lattice solves all quantum problems simultaneously, among periodic configurations of arbitrary fixed density.
\end{conjecture}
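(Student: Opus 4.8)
The line of attack I would pursue is to extend the argument that already settles the even-integer-density case. There, the frame operator is rewritten through the Janssen representation $\vol(\L)\,S_\L=\Theta_{\L^\circ}$ --- an identity valid for \emph{every} lattice because $\phi\in M^1(\R)$ --- and the quantum theta $\Theta_{\L^\circ}$ is diagonalized so that the frame bounds $A_\L$ and $B_\L$ become the minimum and maximum over $z$ of the single theta-type function $\widehat F_{\L^\circ}(z)=\sum_{\l^\circ\in\L^\circ}e^{-\pi|\l^\circ|^2}e^{2\pi i\sigma(\l^\circ,z)}$. Positivity of its coefficients pins the maximum at the origin, so $B_\L$ is a Gaussian lattice sum and $A_\L$ a Gaussian polarization functional of $\L^\circ$, whence Montgomery's theorem and the polarization theorem of B\'etermin, Faulhuber and Steinerberger both single out $\L_2$. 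My plan is to carry out the same diagonalization at an arbitrary rational density by means of the Zak transform, and then to reach irrational densities by a continuity argument, using that the Gaussian frame bounds depend continuously on the lattice (itself to be established via deformation stability of Gabor frames).

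Before confronting the exact bounds it is worth recording what the relaxed inequality \eqref{eq:relaxed} already gives at \emph{every} density. Restricting the test vector in \eqref{eq:frame} to the coherent states $\pi(z)\phi$, which are unit vectors, produces the one-sided estimates $A_\L\le\min_z F_\L(z)$ and $B_\L\ge\max_z F_\L(z)$ with $F_\L(z)=\sum_{\l\in\L}e^{-\pi|\l-z|^2}$. Poisson summation \eqref{eq:PSF} and the triangle inequality force $\max_z F_\L(z)=F_\L(0)=\sum_{\l}e^{-\pi|\l|^2}$, minimized by $\L_2$ (Montgomery), while $\min_z F_\L(z)$ is the polarization quantity maximized by $\L_2$ (B\'etermin, Faulhuber, Steinerberger). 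Hence $\L_2$ solves the relaxed packing, covering and paving problems for all densities, which is the source of the conjecture and of its numerical support. These bounds are, however, only one-sided, and the relaxation is not sharp away from special densities, so they cannot by themselves force the exact optimizer.

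The crux is therefore the exact diagonalization at non-integer density. Under the Zak (Zibulski--Zeevi) transform, $\Theta_{\L^\circ}$ acquires not a scalar but a \emph{matrix-valued} symbol over a fundamental domain, whose size grows with the denominator of the density, and $A_\L$, $B_\L$ become the extreme eigenvalues of this symbol optimized over the domain. The difficulty I expect to dominate is that optimizing these matrix eigenvalue functionals over all lattices of a fixed density does not reduce to the extremization of a single Gaussian theta series: the reduction to the lattice-sum and polarization problems is a degeneracy special to even integer density, where the symbol collapses to a scalar with nonnegative Fourier coefficients and its supremum is attained at a lattice point. For a general matrix symbol there is no such distinguished point, and controlling the top and bottom of its spectrum --- while showing that both are extremized simultaneously by $\L_2$ --- is the genuinely new analytic input the conjecture demands.

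Passing from lattices to arbitrary periodic configurations adds a second, independent obstacle already visible at the relaxed level: the steps above invoke Montgomery's energy inequality and the polarization inequality, and for \emph{periodic} point sets in phase-space dimension two these are precisely the conjectural two-dimensional universal optimality statements, still open in contrast to their $\mathsf{E}_8$ and Leech analogues. Granting those periodic versions, the relaxed analysis extends unchanged and one is returned to the same matrix-symbol step, now for a periodic family of coherent states. In summary, the conjecture appears to rest on two hard pillars --- the open two-dimensional periodic energy and polarization problems, and the passage from a scalar to a matrix-valued symbol in the exact frame-bound analysis --- with the latter being, in my view, where the decisive effort must go.
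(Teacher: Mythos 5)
You are addressing a \emph{conjecture}: the paper contains no proof of this statement, only the special case of even integer density in phase-space dimension $2$ (the theorem in Section~\ref{s:known}), whose mechanism you reproduce accurately --- Janssen's representation turns the frame bounds into the extrema of the scalar symbol $\vol(\L)^{-1}\widehat F_{\L^\circ}(z)$, positivity of the Gaussian coefficients pins the maximum at a lattice point, and Montgomery's theorem together with the polarization theorem of B\'etermin--Faulhuber--Steinerberger then single out $\L_2$. Your account of the relaxed inequality \eqref{eq:relaxed} is also correct, including the (essential) caveat that it only bounds $A_\L$ from above and $B_\L$ from below, so optimality for the relaxed quantities does not transfer to the exact frame bounds.

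What you have written, however, is a research programme rather than a proof, and the two steps you defer are precisely where the entire content of the conjecture lies. First, at non-even density the Zibulski--Zeevi symbol is genuinely matrix-valued and you supply no mechanism for extremizing its top and bottom eigenvalues over all lattices; no analogue of ``nonnegative coefficients force the maximum at a lattice point'' is available there, and this is exactly the obstruction that has kept even the pure lattice case (the Strohmer--Beaver conjecture) open. Second, the extension to periodic configurations requires the periodic versions of Montgomery's inequality and of the polarization inequality in dimension $2$, which the paper itself notes are still unproven, in contrast to the $\mathsf{E}_8$ and Leech cases. Finally, the proposed continuity argument from rational to irrational densities can at best yield non-strict optimality in the limit: even granting that the Gaussian frame bounds depend continuously on the lattice, a limit-of-optimizers argument does not exclude ties or a different optimizer at irrational density without uniform quantitative control at the rational densities. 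So the proposal maps the terrain correctly but closes no gap; the statement remains a conjecture, exactly as the paper presents it.
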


In higher dimensions, the lack of knowledge on the characterization of Gabor frames makes it hard to get results on quantum covering and paving. However, for quantum packing we may use Lemma~\ref{lem:Vgg_bound} and the frame inequality \eqref{eq:frame} to obtain the following estimates on~$B_\L$.
\begin{equation}
    \sum_{\l \in \L} |V_\varphi \varphi(\l)|^2 \leq B_\L \leq \frac{1}{\vol(\R^{2d}/\L)} \sum_{\l^\circ \in \L^\circ} |V_\varphi \varphi(\l^\circ)| = \widetilde{B}_{\L}
\end{equation}

We conclude with a first partial result in higher dimensions and some more open problems.
\begin{theorem}
	Let $d\in\{4,12\}$, so the dimension of phase space is 8 or 24. Then the $\mathsf{E}_8$ lattice and the Leech lattice minimize $\widetilde{B}_\L$ in their respective dimensions.
\end{theorem}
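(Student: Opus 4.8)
The plan is to rewrite $\widetilde{B}_\L$ as a genuine Gaussian lattice sum over the adjoint lattice $\L^\circ$ and then invoke the universal optimality of $\mathsf{E}_8$ and the Leech lattice. First I would record the pointwise identity $|V_\varphi \varphi(z)| = e^{-\pi |z|^2/2}$, which follows from $|\langle \varphi, \pi(z)\varphi\rangle|^2 = e^{-\pi |z|^2}$, together with $\vol(\R^{2d}/\L)^{-1} = \den(\L)$. Substituting these into the definition of $\widetilde{B}_\L$ turns it into the theta-type expression
\[
    \widetilde{B}_\L = \den(\L) \sum_{\l^\circ \in \L^\circ} e^{-\pi |\l^\circ|^2/2}.
\]

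Next I would reduce the minimization of $\widetilde{B}_\L$ to a fixed-density energy-minimization problem over the adjoint lattices. Since $\L^\circ = \mathcal{J}\L^\perp$ and $|\det \mathcal{J}| = 1$, one has $\vol(\R^{2d}/\L^\circ) = \vol(\R^{2d}/\L)^{-1}$, so fixing $\den(\L) = \kappa$ forces $\den(\L^\circ) = 1/\kappa$. Because $\L \mapsto \L^\circ$ is an involution (one checks $\L^{\circ\circ} = \L$ from $\mathcal{J}^2 = -\mathbf{I}$), it is a bijection between the lattices of density $\kappa$ and those of density $1/\kappa$; this legitimizes replacing the minimization over $\L$ by one over $N := \L^\circ$, giving
\[
    \min_{\den(\L) = \kappa} \widetilde{B}_\L = \kappa \cdot \min_{\den(N) = 1/\kappa} \sum_{n \in N} e^{-\pi |n|^2/2}.
\]

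Now I would invoke the Cohn--Kumar--Miller--Radchenko--Viazovska theorem quoted above: the Gaussian energy $\sum_{n \in N} e^{-\pi \alpha |n|^2}$, here with $\alpha = 1/2$, is uniquely minimized among lattices of the prescribed density by a scaled copy $N_\star$ of $\mathsf{E}_8$ when $2d = 8$ and of the Leech lattice when $2d = 24$. The minimizer of $\widetilde{B}_\L$ is then $\L_\star = N_\star^\circ$. To identify it, I would use that $\mathsf{E}_8$ and Leech are unimodular, hence self-dual ($N^\perp = N$ for the unit-covolume representative). Thus $\L_\star = \mathcal{J} N_\star^\perp$ is, after the scaling dictated by the density, the orthogonal image under $\mathcal{J}$ of a copy of $\mathsf{E}_8$ (resp. Leech); since $\mathcal{J} \in O(2d)$ is an isometry, $\L_\star$ is the $\mathsf{E}_8$ (resp. Leech) lattice scaled to density $\kappa$, and uniqueness is inherited from the quoted theorem.

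The only genuinely delicate point, and the \emph{main obstacle}, is the transfer of the optimizer across the adjoint map: a priori the lattice minimizing a sum over $\L^\circ$ need not be recognizable as a named lattice for $\L$ itself. What rescues the clean statement is precisely the self-duality (unimodularity) of $\mathsf{E}_8$ and Leech together with $\mathcal{J} \in O(2d)$, which make $\L \mapsto \L^\circ$ preserve the isometry class of the optimizer. Were the optimal lattices not isodual, the preimage under $(\cdot)^\circ$ would in general be a different lattice and one could not conclude that $\mathsf{E}_8$ or Leech themselves minimize $\widetilde{B}_\L$.
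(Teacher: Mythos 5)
Your proposal is correct and follows essentially the same route as the paper, whose proof is a one-line appeal to the universal optimality of $\mathsf{E}_8$ and the Leech lattice from \cite{Coh-Via22}; you simply make explicit the bookkeeping the paper leaves implicit, namely rewriting $\widetilde{B}_\L$ as $\den(\L)\sum_{\l^\circ\in\L^\circ}e^{-\pi|\l^\circ|^2/2}$ and transferring the optimizer back through the involution $\L\mapsto\L^\circ$ using self-duality of these lattices and $\mathcal{J}\in O(2d)$. That last transfer step is exactly the right point to be careful about, and your treatment of it is sound.
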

\begin{proof}
	This follows as a consequence of the universal optimality of the $\mathsf{E}_8$ and Leech lattice \cite{Coh-Via22}.
\end{proof}

The following conjecture is based on what Robert Calderbank calls ``Proof by Intimitation'', which was ``employed'' by one of the authors in \cite{Fau19} for Conjecture \ref{conj:hexagonal}, namely {\em 'what else could it be?'}.

\begin{conjecture}\label{conj:Leech}
	We conjecture that a Leech lattice solves the quantum paving problem in phase space of dimension 24 for arbitrary fixed density greater than 1.
\end{conjecture}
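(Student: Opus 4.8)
The plan is to reduce the minimization of $\widetilde{B}_\L$ to the Gaussian lattice-sum minimization solved by the universal optimality theorem of Cohn, Kumar, Miller, Radchenko, and Viazovska. First I would record the exact value of the summand. Since $|\langle \varphi, \pi(\l^\circ)\varphi\rangle|^2 = |V_\varphi\varphi(\l^\circ)|^2 = e^{-\pi|\l^\circ|^2}$, we have $|V_\varphi\varphi(\l^\circ)| = e^{-\pi|\l^\circ|^2/2}$, so that
\begin{equation}
	\widetilde{B}_\L = \frac{1}{\vol(\R^{2d}/\L)} \sum_{\l^\circ \in \L^\circ} e^{-\pi|\l^\circ|^2/2}.
\end{equation}
Because we optimize over lattices of a fixed density, the prefactor $\vol(\R^{2d}/\L)^{-1}$ is a constant, and minimizing $\widetilde{B}_\L$ is therefore equivalent to minimizing the Gaussian lattice sum $\sum_{\l^\circ \in \L^\circ} e^{-\pi|\l^\circ|^2/2}$ over the adjoint lattice $\L^\circ$.

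Next I would settle the covolume bookkeeping for the adjoint lattice. Writing $\L = M\Z^{2d}$ with $M \in GL(2d,\R)$, the characterization $\L^\circ = \{z : \sigma(\l,z)\in\Z\ \forall\,\l\in\L\}$ gives $\L^\circ = \mathcal{J}^{-1}M^{-T}\Z^{2d} = \mathcal{J}\L^\perp$, whence $\vol(\R^{2d}/\L^\circ) = |\det\mathcal{J}|^{-1}\,\vol(\R^{2d}/\L)^{-1} = \vol(\R^{2d}/\L)^{-1}$. In particular, setting $V = \vol(\R^{2d}/\L)$, the map $\L\mapsto\L^\circ$ is a bijection from the lattices of covolume $V$ onto the lattices of covolume $1/V$, and a direct computation shows it is an involution, $(\L^\circ)^\circ = \L$. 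Applying the universal optimality theorem with $\alpha = 1/2$ in phase-space dimension $2d\in\{8,24\}$ then shows that $\sum_{\l^\circ\in\L^\circ} e^{-\pi|\l^\circ|^2/2}$ is uniquely minimized, among all lattices $\L^\circ$ of covolume $1/V$, by a suitably scaled (and rotated) copy of $\mathsf{E}_8$, respectively of the Leech lattice.

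It remains to transport this conclusion from $\L^\circ$ back to $\L$, which is where I expect the only real (if modest) obstacle to lie: the universal optimality result constrains $\L^\circ$, whereas the statement is phrased in terms of $\L$ itself. Here I would invoke that both $\mathsf{E}_8$ and the Leech lattice are even unimodular, hence self-dual up to scaling. Concretely, if the optimal $\L^\circ$ equals $c\,R\,\mathsf{E}_8$ for a scale $c>0$ and a rotation $R$, then $(\L^\circ)^\perp = c^{-1}R\,\mathsf{E}_8$, so that $\L = (\L^\circ)^\circ = \mathcal{J}(\L^\circ)^\perp = c^{-1}\mathcal{J}R\,\mathsf{E}_8$ is again a scaled, rotated copy of $\mathsf{E}_8$, because $\mathcal{J}$ is orthogonal; the same argument applies verbatim to the Leech lattice. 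Hence the minimizing $\L$ is itself $\mathsf{E}_8$, respectively the Leech lattice, which is the claim, and uniqueness is inherited from the uniqueness in the universal optimality theorem.

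All the remaining manipulations, namely the identity $|V_\varphi\varphi(\l^\circ)| = e^{-\pi|\l^\circ|^2/2}$ and the covolume computation for $\L^\circ$, are routine, so the essential content of the proof is precisely the reduction above to the Gaussian minimization of Cohn, Kumar, Miller, Radchenko, and Viazovska together with the self-duality of the two exceptional lattices.
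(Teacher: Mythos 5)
The statement you were asked about is a \emph{conjecture}, and the paper offers no proof of it --- it is explicitly justified only by heuristic analogy (optimality of the Leech lattice for sphere packing, universal optimality for energy minimization, and its conjectured optimality for covering). What your argument actually establishes is the \emph{theorem} immediately preceding the conjecture, namely that the $\mathsf{E}_8$ and Leech lattices minimize the quantity $\widetilde{B}_\L = \vol(\R^{2d}/\L)^{-1}\sum_{\l^\circ\in\L^\circ}|V_\varphi\varphi(\l^\circ)|$, and for that statement your route (reduce to the Gaussian lattice sum $\sum e^{-\pi|\l^\circ|^2/2}$, apply universal optimality from \cite{Coh-Via22}, transport back through the adjoint map using self-duality of the two lattices) is essentially the paper's own one-line proof, fleshed out correctly.

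The gap is that this does not come close to the conjecture. Quantum paving asks to minimize the ratio $B_\L/A_\L$ of the \emph{optimal} frame bounds. Your argument says nothing about the lower bound $A_\L$ at all, and even on the upper side it only controls $\widetilde{B}_\L$, which is merely an upper \emph{estimate} for $B_\L$ (one has $\sum_{\l}|V_\varphi\varphi(\l)|^2 \le B_\L \le \widetilde{B}_\L$); minimizing an upper estimate over lattices does not minimize the quantity it estimates. There is also a structural obstruction to pushing your approach further: $\widetilde{B}_\L$ depends only on the Euclidean length distribution of $\L^\circ$ and is therefore invariant under rotations, whereas the true frame bounds $A_\L$ and $B_\L$ are \emph{not} rotation-invariant in phase space dimension $>2$ --- this is exactly the point of the paper's $\L_1(\alpha)$ versus $\L_2(\alpha)$ example and the reason the conjecture is phrased as ``\emph{a} Leech lattice.'' Any genuine proof of the conjecture would have to distinguish between rotated copies of the Leech lattice and to control $A_\L$, for which (as the paper notes) even a characterization of multivariate Gaussian Gabor frames is currently missing.
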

Some arguments in favor of Conjecture \ref{conj:Leech} are the optimality of the Leech lattice for sphere packing \cite{Coh-Via17} and its universal optimality for energy minimization \cite{Coh-Via22} combined with the fact that it is still the best known solution to the sphere covering problem and locally optimal among lattice covering \cite{SchVal05}. An issue, however, which needs to be considered is that quantum paving is highly sensitive to permutations of the base vectors and rotations of the lattice (expect for phase space dimension 2). We refer, e.g., to \cite{PfaRas13} where some aspects of this problem are discussed. Hence, it is important to note that we conjecture that there is \textit{a} (not \textit{the}) Leech lattice which solves quantum paving, leaving the option that a permutation of the base vectors or a rotation of the classical Leech lattice is needed and that other versions may lead to a non-optimal solution.

To better understand the difficulty, consider the following example. Take $\Z^4$ with its canonical base in $\R^4$. Let $\alpha, \beta > 0$ and set, only for the moment,
	\begin{equation}
		\L_1(\alpha) = \alpha
		\begin{pmatrix}
			\beta & 0 & 0 & 0\\
			0 & \beta & 0 & 0\\
			0 & 0 & \frac{1}{\beta} & 0\\
			0 & 0 & 0 & \frac{1}{\beta}
		\end{pmatrix}
		 \Z^4
		\quad \text{ and } \quad
		\L_2(\alpha) = \alpha
		\begin{pmatrix}
			\beta & 0 & 0 & 0\\
			0 & \frac{1}{\beta} & 0 & 0\\
			0 & 0 & \beta & 0\\
			0 & 0 & 0 & \frac{1}{\beta}
		\end{pmatrix} \Z^4.
	\end{equation}
	As lattices, $\L_1(\alpha)$ and $\L_2(\alpha)$ would usually be identified. One is obtained from the other by re-labeling the co-ordinates (so they are equivalent by a permutation matrix). However, the matrix generating $\L_1(\alpha)$ is symplectic, whereas $\L_2(\alpha)$ is not a symplectic lattice (unless $\beta = 1$). A consequence for Gabor systems in $\Lt[2]$ with 2-dimensional Gaussian function $\varphi(t_1,t_2) = 2^{1/2} e^{-\pi(t_1^2+t_2^2)}$ is
	\begin{equation}
		\G(\varphi, \L_1(\alpha)) \text{ is a frame}
		\Longleftrightarrow
		\alpha < 1,
		\quad \text{ whereas } \quad
		\G(\varphi, \L_2(\alpha)) \text{ is a frame}
		\Longleftrightarrow
		\alpha \beta < 1 \, \wedge \, \alpha/\beta < 1.
	\end{equation}
	This follows from the characterization results on univariate Gaussian Gabor frames \cite{Lyu92}, \cite{Sei92}, \cite{SeiWal92} in combination with results on tensor Gabor systems \cite{Bou08} (see also \cite[Prop.~3]{PfaRas13}).

In phase space dimension 8 it is even more delicate to come up with a first guess. While the $\mathsf{E}_8$ lattice yields the best sphere packing in dimension 8 \cite{Via17} and is universally optimal for energy minimization \cite{Coh-Via22}, it provably fails to give the optimal sphere covering solution: it has worse covering density than the $\mathsf{A}_8^*$ lattice \cite{Conway} and is not even locally optimal \cite{SchVal05}. We pose the following question.
\question
	Does an $\mathsf{E}_8$ lattice solve the quantum paving problem in phase space of dimension 8 for arbitrary fixed density greater than 1?
Indeed, we expect the answer to be `\textit{no}' due to results and an argument (eq.~(1.1)) given in \cite{FauSha23}, and the non-optimality of $\mathsf{E}_8$ regarding the sphere covering problem.

\question
Is it possible that in some dimension we may have different solutions for different lattice densities for the quantum packing, covering, or paving problem? 

\remark The results in~\cite[Section 4]{Noh} give rise to speculation that quantum paving may also be of relevance in connection with quantum communication over Gaussian thermal loss channels, when combined with Gottesman-Kitaev-Preskill codes. We leave this admittedly highly speculative idea for future research.

\section*{Acknowledgement}

M.~Faulhuber was supported by the Austrian Science Fund (FWF) 10.55776/P33217. T.~Strohmer was supported by NSF DMS-2208356. We would like to thank Stefan Steinerberger for stimulating discussions  on the topic of this article.


\end{document}